\DeclareMathOperator{\operatorClassNP}{{\sf NP}}
\newcommand{\classNP}{\ensuremath{\operatorClassNP}}
\newcommand{\cost}{\mathrm{cost}}
\newcolumntype{P}[1]{>{\centering\arraybackslash}p{#1}}
\newlength{\RoundedBoxWidth}
\newsavebox{\GrayRoundedBox}
\newenvironment{GrayBox}[1]%
   {\setlength{\RoundedBoxWidth}{.93\textwidth}
    \def\boxheading{#1}
    \begin{lrbox}{\GrayRoundedBox}
       \begin{minipage}{\RoundedBoxWidth}}%
   {   \end{minipage}
    \end{lrbox}
    \begin{center}
    \begin{tikzpicture}%
       \node(Text)[draw=black!20,fill=white,rounded corners,%
             inner sep=2ex,text width=\RoundedBoxWidth]%
             {\usebox{\GrayRoundedBox}};
        \coordinate(x) at (current bounding box.north west);
        \node [draw=white,rectangle,inner sep=3pt,anchor=north west,fill=white] 
        at ($(x)+(6pt,.75em)$) {\boxheading};
    \end{tikzpicture}
    \end{center}}     
\newenvironment{defproblemx}[2][]{\noindent\ignorespaces%
                                \FrameSep=6pt%
                                \parindent=0pt%
                \vspace*{-1.5em}
                \ifthenelse{\isempty{#1}}{%
                  \begin{GrayBox}{\textsc{#2}}%
                }{%
                  \begin{GrayBox}{\textsc{#2} parameterized by~{#1}}%
                }
                \begin{tabular*}{\textwidth}{@{\hspace{.1em}} >{\itshape} p{1.8cm} p{0.8\textwidth} @{}}%
            }{
                \end{tabular*}%
                \end{GrayBox}%
                \ignorespacesafterend
            }  
\title{
FPT Approximation for Fair Minimum-Load Clustering} 
\author{Sayan Bandyapadhyay}{Department of Informatics, University of Bergen, Norway}{Sayan.Bandyapadhyay@uib.no}{}{}
\author{Fedor V. Fomin}{Department of Informatics, University of Bergen, Norway}{Fedor.Fomin@uib.no}{https://orcid.org/0000-0003-1955-4612}{}
\author{Petr A. Golovach}{Department of Informatics, University of Bergen, Norway}{Petr.Golovach@uib.no }{https://orcid.org/0000-0002-2619-2990}{}
\author{Nidhi Purohit}{Department of Informatics, University of Bergen, Norway}{Nidhi.Purohit@uib.no}{}{}
\author{Kirill Simonov}{Department of Informatics, University of Bergen, Norway}{kirillsimonov@gmail.com}{}{}
\authorrunning{S. Bandyapadhyay, F.\,V. Fomin, 
P.\,A. Golovach,
N. Purohit, 
K. Simonov}
\keywords{fair clustering, load balancing, parameterized approximation, minimum-load, $k$-median} 
\begin{document}

\maketitle

\begin{abstract}
In this paper, we consider the Minimum-Load $k$-Clustering/Facility Location (MLkC) problem where we are given a set $P$ of $n$ points in a metric space that we have to cluster and an integer $k > 0$ that denotes the number of clusters. Additionally, we are given a set $F$ of cluster centers in the same metric space. The goal is to select a set $C\subseteq F$ of $k$ centers and assign each point in $P$ to a center in $C$, such that the maximum \emph{load} over all centers is minimized. Here the load of a center is the sum of the distances between it and the points assigned to it. 

Although clustering/facility location problems have a rich literature, the minimum-load objective is not studied substantially, and hence MLkC has remained a poorly understood problem. More interestingly, the problem is notoriously hard even in some special cases including the one in line metrics as shown by Ahmadian et al. [APPROX 2014, ACM Trans. Algorithms 2018]. They also show \textsf{APX}-hardness of the problem in the plane. On the other hand, the best known approximation factor for MLkC is $O(k)$, even in the plane. 

In this work, we study a fair version of MLkC inspired by the work of Chierichetti et al. [NeurIPS, 2017], which generalizes MLkC. Here the input points are colored by one of the $\ell$ colors denoting the group they belong to. MLkC is the special case with $\ell=1$. Considering this problem, we are able to obtain a $3$-approximation in $f(k,\ell)\cdot n^{O(1)}$ time. Also, our scheme leads to an improved $(1 + \epsilon)$-approximation in case of Euclidean norm, and in this case, the running time depends only polynomially on the dimension $d$. Our results imply the same approximations for MLkC with running time $f(k)\cdot n^{O(1)}$, achieving the first constant approximations for this problem in general and Euclidean metric spaces. Our work makes substantial progress in understanding the structure of the minimum-load objective, advancing the frontiers of knowledge about the  MLkC problem. 
\end{abstract}

\section{Introduction}
\label{sec:intro}
\emph{Clustering} is the task of partitioning a set of data items into a number of groups (or clusters) such that each group contains similar set of items. Typically, the similarity of the clusters is modeled by a proxy objective function, which one needs to optimize. Being a fundamental computational problem in nature, clustering has a host of diverse applications in computer science and other disciplines. Consequently, the problem has been studied with several different and possibly independent objectives. Some of these became notably popular, for example, $k$-means, $k$-median, and $k$-center \cite{gonzalez1985clustering,feder1988optimal,AryaGKMMP-SIAMJ04,KanungoMNPSW04}. In this paper, we consider an objective which is not studied substantially in the literature. In particular, we consider minimum-load clustering. Here we are given a set $P$ of points in a metric space that we have to cluster and an integer $k > 0$ that denotes the number of clusters. Additionally, we are given a set $F$ of cluster centers in the same metric space. The goal is to select a set $C\subseteq F$ of $k$ centers and assign each point in $P$ to a center in $C$, such that the maximum \emph{load} over all centers is minimized. Here the load of a center is the sum of the distances between it and the points assigned to it. That is, if $P'$ is the set of points assigned to a center $c$, then its load is $\sum_{p\in P'} d(c,p)$, where $d$ is the given metric. We formally refer to this problem as Minimum-Load $k$-Clustering (MLkC). MLkC can be used to model applications where the cost of serving the clients (or points) assigned 
to a facility (or center) is incurred by the facility, e.g., assigning jobs to the $k$ best servers from a pool of servers balancing their loads.   

Surprisingly, MLkC is \classNP-hard even if the solution set of centers $C$ is given, via a reduction from makespan-minimization \cite{AhmadianBFJSS18}. In fact, this assignment version of the problem can be shown to be \classNP-hard even in line metrics and for $k=2$, via a simple reduction from the Partition problem \cite{KORF1998181}. (In Partition, given a set of integers, the goal is to partition it into two subsets such that the difference between the sums of the integers in two subsets is minimized.) Moreover, Ahmadian et al.~\cite{AhmadianBFJSS18} proved that  the problem is strongly \classNP-hard in line metrics (points on a line) and \textsf{APX}-hard in the plane. On the positive side, an $O(k)$-approximation follows for this problem from any existing $O(1)$-approximation for $k$-median \cite{CharikarGTS-STOC99,AryaGKMMP-SIAMJ04,DBLP:journals/jacm/JainV01,DBLP:journals/siamcomp/LiS16,DBLP:journals/talg/ByrkaPRST17}. This is true, as $k$-median minimizes the sum of the loads of the centers. Also, constant-approximations are known for MLkC in some special cases, e.g., in star metrics and line metrics. Beyond these special cases, obtaining better than $O(k)$-approximation in polynomial time remained a notoriously hard question, even in the plane. Indeed, as explicitly pointed out by Ahmadian et al.~\cite{AhmadianBFJSS18}, MLkC is resilient to attack by the standard approximation techniques including LP rounding and local search, which has been fairly successful in obtaining good approximation algorithms for other clustering problems. Given these difficulties, we investigate whether it is possible to obtain $O(1)$-approximation for MLkC if we allow time $f(k)\cdot n^{O(1)}$ instead of only $n^{O(1)}$, for some function $f(.)$ independent of the input size $n$. Indeed, we study a much more general \emph{fair} version of the problem. 

Fair clustering was introduced by Chierichetti et al.~\cite{chierichetti2017fair} with the goal of removing inherent biases from the regular clustering models. In this setting, we also have a sensitive or protected feature of the data points, e.g., gender or race. The goal is to obtain a clustering where the fraction of points from a
traditionally underrepresented group (w.r.t. the protected feature) in every cluster is approximately equal to the fraction of points from
this group in the whole dataset. For simplicity, they assumed that  the protected feature can take only two values and designed fair $k$-center and $k$-median clustering algorithms in this setting. In particular, here one is given two sets of points $R$ and $B$ of color red and blue, respectively, and a balance parameter $t\in [0,1]$. The objective is to find a clustering such that in every cluster $O$, the ratio between the number of red points and the number of blue points is at least $t$ and at most $1/t$, i.e., $t \le \frac{|O\cap R|}{|O\cap B|} \le 1/t$. Subsequently, R\"{o}sner and Schmidt \cite{rosner2018privacy} considered a general model where the protected feature can take any number of values and designed fair clustering algorithms for $k$-center objective. Later, Bercea et al. \cite{bercea2019cost} and Bera et al. \cite{bera2019fair} independently considered a fair clustering model that generalizes the models in both \cite{chierichetti2017fair} and \cite{rosner2018privacy}. In this model, we are given a partition $\{P_1$, $P_2$, \ldots, $P_\ell\}$ of the input point set $P$ and balance parameters $0\le \beta_i\le \alpha_i\le 1$ for each group $1\le i\le \ell$. Then a clustering is called $(\alpha,\beta)$-\emph{fair} if the fraction of points
from each group $i$ in every cluster is at least $\beta_i$ and at most $\alpha_i$. In this paper, we study the $(\alpha,\beta)$-Fair Minimum-Load $k$-Clustering (FMLkC) problem, where the goal is to compute an $(\alpha,\beta)$-fair clustering that minimizes the maximum load. (For a formal definition, please see Section \ref{sec:prelims}.) We note that the only clustering objectives considered in all the above mentioned  works on fair clustering are $k$-means, $k$-median and $k$-center. To the best of our knowledge, fair clustering was not studied with the  minimum-load objective before our work.   

\subsection{Our Results and Techniques}
Considering the FMLkC problem in general and Euclidean metric spaces we obtain the following results.

\begin{theorem}[\textbf{Informal}]
    There is a $3$-approximation algorithm for $(\alpha,\beta)$-Fair Minimum-Load $k$-Clustering in general metric spaces that runs in time $2^{\tilde{O}(k \ell^2)} n^{O(1)}$. For $d$-dimensional Euclidean spaces, there is a $(1 + \epsilon)$-approximation algorithm for $(\alpha,\beta)$-Fair Minimum-Load $k$-Clustering with running time
    $2^{\tilde{O}(k \ell^2/\epsilon^{O(1)})} n^{O(1)}$. 
\end{theorem}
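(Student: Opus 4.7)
The plan is to reduce $(\alpha,\beta)$-Fair Minimum-Load $k$-Clustering to a constrained fair assignment subproblem after a structured enumeration of candidate centers. I would begin by binary searching over the value of $L^\ast = \OPT$: since all relevant thresholds lie in the set of $O(n^2)$ pairwise distances, it suffices to decide, for each candidate $L$, whether a fair clustering of maximum load at most $L$ exists, and if so to produce one of load at most $3L$ (respectively $(1+\epsilon)L$ in the Euclidean case).

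The structural anchor is the following averaging observation. If an optimal cluster $P_i$ has center $c_i$ and load $L_i \leq L^\ast$, then every $q \in P_i$ lies within distance $L^\ast$ of $c_i$, and there is a point $p_i \in P_i$ with $d(p_i, c_i) \leq L_i/|P_i|$. Routing the whole cluster through $p_i$ gives load at most $|P_i|\cdot(L_i/|P_i|) + L_i = 2L_i$ by the triangle inequality. Hence a single well-chosen surrogate center per optimal cluster suffices for a small constant factor, and a $(1+\epsilon)$-factor is within reach if we can pinpoint a surrogate even closer to $c_i$ — cheap in Euclidean space via centroids.

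To make this algorithmic, I would construct a color-stratified coreset $S \subseteq P \cup F$ of size $g(k,\ell) = 2^{\tilde O(k\ell^2)}$, guaranteed to contain, for the unknown optimum, a $k$-tuple that simultaneously delivers an average-distance representative of each cluster and respects the color profile needed for fairness. The $k\ell^2$ in the exponent is produced by: $k$ independent cluster-wise guesses, a color-conditional sampling that pays $\tilde O(\ell)$ per cluster per color to hit each of the $\ell$ color classes with the right frequency, and a union bound over the $\ell$ classes. Enumerating all viable $k$-tuples in $S$ is the outer loop. For each guessed $C = \{\tilde c_1, \ldots, \tilde c_k\}$, I then solve the fair assignment LP: minimize $L$ subject to $\sum_i x_{ip} = 1$ for every $p$; $\beta_t \sum_p x_{ip} \leq \sum_{p \in P_t} x_{ip} \leq \alpha_t \sum_p x_{ip}$ for every cluster $i$ and color $t$; and $\sum_p x_{ip}\, d(\tilde c_i, p) \leq L$ for every $i$. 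Iterative rounding of this structured transportation-type polytope yields an integral fair assignment of load within a small constant factor of the LP optimum.

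For the Euclidean variant, the average-distance representative is replaced by the centroid of $\poly(1/\epsilon)$ color-stratified samples from each optimal cluster, which is a $(1+\epsilon)$-surrogate for the Euclidean $1$-median by standard concentration. This inflates the enumeration to $2^{\tilde O(k\ell^2/\epsilon^{O(1)})}$ but keeps the dependence on the dimension $d$ polynomial, since neither sampling nor the LP needs to enumerate over coordinates. The main obstacle is the coreset/sampling step: the set $S$ must simultaneously guarantee that (i) the triangle-inequality blow-up of using a sampled surrogate instead of the true center fits inside the $3$-factor (or $(1+\epsilon)$-factor), and (ii) the color profile of the guessed tuple is close enough to the true one for the LP rounding to respect fairness up to controlled additive slack. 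Balancing these two constraints jointly across all $k$ clusters and $\ell$ colors is where I expect the bulk of the technical work to lie.
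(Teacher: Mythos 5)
Your high-level architecture --- enumerate a short list of candidate $k$-tuples of centers in FPT time via sampling/averaging, then solve a fair assignment subproblem for each tuple --- is the same as the paper's. The paper gets the candidate centers by reusing the $D$-sampling machinery of Bhattacharya et al.\ and Goyal et al.\ (whose per-cluster guarantees transfer to the max-load objective essentially for free, and whose list size is $2^{\tilde{O}(k/\epsilon^{O(1)})}$, \emph{independent of $\ell$}; the $\ell$-dependence enters only through the assignment step, so your color-stratified coreset is not needed). The decisive gap is in your assignment step. You write that the natural fair-assignment LP can be rounded iteratively ``to within a small constant factor of the LP optimum.'' This is exactly the step the paper identifies as the hard core of the problem: even with $\ell=1$ and no fairness constraints, assigning points to a \emph{given} set of $k$ centers under the min-load objective is \classNP-hard (by reduction from Partition), and the best rounding known for the plain LP is the factor-$2$ GAP-style rounding of Shmoys--Tardos. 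Your polytope is not a transportation polytope --- the fairness constraints couple $\sum_{p\in P_t} x_{ip}$ to the fractional total $\sum_p x_{ip}$, destroying integrality --- and a single fractionally-split far-away point can blow up a center's load by much more than $(1+\epsilon)$ when forced to one side. Moreover, a constant-factor assignment loss multiplied by the factor-$3$ center loss would not yield the claimed $3$-approximation; you need the assignment step to be $(1+\epsilon)$-accurate. The paper achieves this by (i) rounding distances to powers of $(1+\epsilon)$, (ii) observing that each center receives at most $O(1/\epsilon^2)$ ``costly'' points (those at distance $\ge \epsilon^2 B$) and exhaustively guessing, for every center, distance class, and color, how many costly points it receives, (iii) encoding these guesses plus integral per-color weights $y_{gi}$ into an MILP with only $k\ell$ integer variables solved via Lenstra--Kannan, and (iv) rounding the remaining fractional cheap-point mass with one flow network per color group, which costs only an additive $O(\epsilon\ell B)$ per center. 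None of this is recoverable from ``iterative rounding'' of your LP as stated.

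Two smaller problems: your binary search over the $O(n^2)$ pairwise distances is unsound because the optimal load is a \emph{sum} of distances, not a single distance (the paper instead brackets $\mathrm{OPT}$ in $[D/k,D]$ using a fair $k$-median solution and searches geometrically); and in the Euclidean case the centroid of a sample is a $(1+\epsilon)$-surrogate for the $1$-\emph{mean}, not the $1$-median --- the sum-of-distances objective requires the more delicate candidate-generation of Bhattacharya et al., which is what the paper invokes.
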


In the above theorem, the $\tilde{O}()$ notation hides logarithmic factors. Note that all the running times are \emph{fixed-parameter tractable} (FPT) \cite{cygan2015parameterized}  in $k$ and $\ell$ for  constant $\epsilon$. Moreover, our results imply the same approximations for Minimum-Load $k$-Clustering with running times FPT in only $k$, achieving the first constant approximations for this problem in general and Euclidean metric spaces. Note that in the Euclidean case, the running time depends only polynomially on the dimension $d$. Recall that no better than $O(k)$-approximation was known before even in the plane, and this version is known to be \textsf{APX}-hard. Also, the reduction  mentioned before from Partition eliminates the existence of an exact algorithm in time $f(k)\cdot n^{O(1)}$, unless \textsf{P} $\ne$  \textsf{NP}, as MLkC in line metrics is already \classNP-hard when $k=2$. In this sense, our FPT $(1+\epsilon)$-approximation for Euclidean spaces is tight and the best possible.  

Our results are motivated by the recent FPT approximation results for constrained clustering with popular $k$-median and $k$-means objectives \cite{Cohen-AddadL19,abs-2007-10137}. However, these results are based on coreset construction. A coreset is a summary of the original dataset from which it is possible to retrieve a near-optimal clustering. Their main contribution is to show that it is possible to obtain coresets of size polynomial in $k, \log n$ and $d$. Alternatively, the input can be compressed to an almost equivalent instance of size $poly(kd\log n)$. Then one can enumerate all possible $k$-tuples of centers in FPT time using the coreset and output the $k$-tuple having the minimum clustering cost.  This yields a $(1+\epsilon)$-approximation for Euclidean spaces and a slightly larger $3$-approximation for general metric spaces due to some technical reasons. However, such a small-sized coreset is not known for our problems. Instead, we adapt approaches from \cite{ding2020unified,GoyalJK20,Bhattacharya2018} used for directly obtaining FPT approximations for constrained $k$-median and $k$-means clustering. We note that these schemes were known only in the special Euclidean case until recently \cite{GoyalJK20}. All these schemes produce in FPT (in $k$) time a list of $k$-tuples of centers, such that at least one such $k$-tuple is a near-optimal set of centers. Using the rather tedious  similarity of the $k$-median and the minimum-load objectives, we show these approaches can be adapted for our problems as well. However, given such a $k$-tuple of centers, assigning the points to the best centers or finding the optimal clustering, in our case is still \classNP-hard. Nevertheless, we give a Mixed-Integer Linear Programming (MILP) based $(1+\epsilon)$-approximation for this assignment problem that runs in time FPT in $k$ and $\ell$ (in $k$ only for MLkC). Our MILP is partly motivated by the fair $k$-median MILP \cite{abs-2007-10137}. However, our MILP and its rounding are much more non-trivial compared to that for fair $k$-median, especially due to the difference in the objectives. For example, if we forget about the fairness constraints, in that case the assignment algorithm for $k$-median is trivial. Assign each point to its closest center. However, even in this case the assignment problem for MLkC is \classNP-hard. Also, no near-optimal assignment scheme was known in the literature (a $2$-approximate assignment scheme follows from the generalized assignment problem (GAP) \cite{DBLP:journals/mp/ShmoysT93}). Thus, in this case we give a novel $(1+\epsilon)$-approximate assignment scheme. In this case, we do not need MILP -- rounding of an LP is sufficient to obtain the desired assignment.  All these schemes applied together help us achieve the desired FPT approximations.

\subsection{Related Work}
Even et al.~\cite{EvenGKRS03} and Arkin et al.~\cite{ArkinHL06} studied the MLkC problem under the name \emph{min-max star cover}, where $F=P$. In this setting, MLkC can be viewed as a weighted covering problem where the task is to cover the nodes of a graph by stars. Both works obtain \emph{bicriteria} approximation for this problem where the solution returned has near-optimal load, but uses more than $k$ centers.  Ahmadian et al.~\cite{AhmadianBFJSS18} studied several special cases of the MLkC problem (under the name Minimum-Load $k$-Facility Location\footnote{MLkC can also be viewed as a facility location problem with zero facility opening costs where we can still open only $k$ facilities}). They fully resolved the status  of the MLkC problem in line metrics. On the one hand, they designed a PTAS based on dynamic programming. On the other hand, they proved that this version is strongly \classNP-hard. They also designed a quasi-PTAS in tree metrics. Moreover, they studied a variant of the problem with client demands in star metrics. 

The notion of fair clustering introduced by Chierichetti et al.~\cite{chierichetti2017fair} has been studied extensively in the literature. For $k$-center objective, there are several polynomial-time true $O(1)$-approximations \cite{rosner2018privacy,bercea2019cost}. For $k$-median and $k$-means objectives, polynomial-time $O(1)$-approximations are designed by violating the fairness constraints by an additive factor \cite{bercea2019cost,bera2019fair}. On the other hand, it is possible to obtain true $O(1)$-approximations for these two objectives if one is allowed to use $f(k,\ell)\cdot n^{O(1)}$ time \cite{abs-2007-10137}. Clustering problems have been studied under several other notions of fairness, e.g., see \cite{ahmadian2019clustering,chen2019proportionally, kleindessner2019fair, kleindessner2019guarantees,GhadiriSV21,abs-2103-02512,AbbasiBV21}.    

\subparagraph*{Organization.} We define some useful notations and our problem formally in Section \ref{sec:prelims}. Then we describe the assignment algorithm for the FMLkC problem in Section \ref{sec:assignment}. Finally, in Section \ref{sec:algo}, we describe the full algorithms for FMLkC in details.

\section{Preliminaries}
\label{sec:prelims}

We are given a set $P$ of points in a metric space $(\mathcal{X}, d(\cdot, \cdot))$, that we have to cluster. We are also given a set $F$ of cluster centers in the same metric space. We note that $P$ and $F$ are not-necessarily disjoint, and in fact, $P$ may be equal to $F$. In the Euclidean version of a clustering problem, $P\subseteq \mathbb{R}^d$, $F= \mathbb{R}^d$ and $d(\cdot, \cdot)$ is the Euclidean distance.\footnote{Due to the lack of  better notations, we denote the dimension by $d$ and distance function by $d(\cdot, \cdot)$} In the metric version, we assume that $F$ is finite. Thus, strictly speaking, the Euclidean version is not a special case of the metric version. 
In the metric version, we denote $|P\cup F|$ by $n$ and in the Euclidean version, $|P|$ by $n$.  
For any integer $t\ge 1$, we denote the set $\{1,2,\ldots,t\}$ by $[t]$. 

For a partition $\mathbb{O} = \{O_1, \ldots, O_k\}$ of $P$ and a set of $k$ cluster centers $C = \{c_1, \ldots, c_k\} \subset F$, we say that $\mathbb{O}$ is a \emph{clustering} of $P$ with the centers $c_1$, \ldots, $c_k$. We say that the \emph{minimum-load cost} of this clustering (also, simply \emph{cost of clustering}) is $\max_{i \in [k]} \cost_{c_i}(O_i)$. Here $\cost_{c_i}(O_i)$ denotes the sum-of-distances cost of the cluster $O_i$ with the center $c_i$, which is
$\cost_{c_i}(O_i) = \sum_{x \in O_i} d(x, c_i).$
We use the following notation to denote the cost of clustering w.r.t. the set of centers $C$ up to a permutation of the clusters,
\[\cost_{C}(\mathbb{O}) = \min_{i_1, \ldots, i_k} \max_{j \in [k]} \cost_{c_j}(O_{i_j}),\]
where $i_1$, \ldots, $i_k$ is a permutation of $[k]$.
We also denote by $\cost(O_i)$ the cost of a cluster $O_i$ with the optimal choice of a center, that is,
$\cost(O_i) = \min_{c \in F} \cost_c (O_i)$,
and by $\cost(\mathbb{O})$ the optimal cost of clustering $\mathbb{O}$, 
\[\cost(\mathbb{O}) = \min_{\substack{C \subset F\\|C| = k}} \cost_C (\mathbb{O}).\]
Alternatively, a clustering with centers in $C \subset F$ can be defined as an assignment $\varphi: P \to C$. The assignment $\varphi$ then corresponds to a clustering $\{\varphi^{-1}(c)\}_{c \in C}$, and we say that the cost of the assignment $\varphi$ is $\cost(\varphi)=\max_{c\in C} \sum_{x\in P: \varphi(x) = c} d(x, c)$.

Now we define the main problem of our interest, where the goal is to find the minimum-cost clustering that satisfies the fairness constraints.
\begin{definition}
    \label{definition:fair}
    In the $(\alpha,\beta)$-Fair Minimum-Load $k$-Clustering (FMLkC) problem, we are given a partition $\{P_1$, $P_2$, \ldots, $P_\ell\}$ of $P$. We are also given an integer $k>0$ and two \emph{fairness vectors} $\alpha, \beta \in [0, 1]^\ell$, $\alpha = (\alpha_1, \ldots, \alpha_\ell)$, $\beta = (\beta_1, \ldots, \beta_\ell)$. The objective is to select a set of at most $k$ centers $C \subset F$ and an assignment $\varphi: P \to C$ such that $\varphi$ satisfies the following \emph{fairness constraints}:
    \begin{gather*}
        \left|\{x \in P_i : \varphi(x) = c\}\right| \le \alpha_i \cdot \left|\{x \in P : \varphi(x) = c\}\right|, \quad \forall c \in C, \forall i \in [\ell],\\
        \left|\{x \in P_i : \varphi(x) = c\}\right| \ge \beta_i \cdot \left|\{x \in P : \varphi(x) = c\}\right|, \quad \forall c \in C, \forall i \in [\ell],
    \end{gather*}
    and $\cost(\varphi)$ is minimized among all such assignments.
\end{definition}

Minimum-Load $k$-Clustering (MLkC) is a restricted case of FMLkC with $\ell=1$, and hence there is no fairness constraints involved in this case. The $(\alpha,\beta)$-Fair $k$-median problem is defined identically except there the cost is $\cost(\varphi)=\sum_{c\in C} \sum_{x\in P: \varphi(x) = c} d(x, c).$

 \section{Assignment Problem for FMLkC}
\label{sec:assignment}

In the $(\alpha,\beta)$-fair assignment problem, we are additionally given a set of centers $C\subset F$ and the goal is to find an assignment $\varphi: P \to C$ such that $\varphi$ satisfies the fairness constraints and $\cost(\varphi)=\max_{c\in C} \sum_{x\in P: \varphi(x) = c} d(x, c)$ is minimized.

We refer to an assignment as a fair assignment if it satisfies the fairness constraints. Also, we denote the optimal cost of $(\alpha,\beta)$-fair assignment by OPT. In this section, for any $\epsilon > 0$, we give a $(1+\epsilon)$-approximation for this problem in $f(k,\ell, \epsilon)\cdot n^{O(1)}$ time for some computable function $f$. In particular, we solve a budgeted version of the problem where we are also given a budget $B$ and the goal is to decide whether there is a fair assignment of cost at most $B$. 

\begin{lemma}
Suppose there is an algorithm $\mathcal{A}$ that given an instance of budgeted $(\alpha,\beta)$-fair assignment and any $\epsilon > 0$, in $T(n,k,\ell,\epsilon)$ time, either returns a feasible assignment of cost at most $(1+\epsilon)B$, or correctly detects that there is no feasible assignment with budget $B$. Then for any $\epsilon > 0$, one can obtain a $(1+\epsilon)$-approximation for $(\alpha,\beta)$-fair assignment in $(k\ell)^{O(k\ell)} n^{O(1)}+O_{\epsilon}(\log k)\cdot T(n,k,\ell,\epsilon/3)$ time\footnote{$O_{\epsilon}()$ notation hides $O(1/\epsilon)$ factor.}. 
\end{lemma}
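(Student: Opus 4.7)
The plan is to reduce optimization to feasibility via binary search on a multiplicative grid, calibrated so that only $O_\epsilon(\log k)$ queries to $\mathcal{A}$ suffice. For this, I must first secure a priori bounds on $\OPT$ whose ratio is polynomial in $k$; computing these bounds accounts for the $(k\ell)^{O(k\ell)} n^{O(1)}$ summand of the running time.

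First, I would compute a value $M$ with $\OPT\le M\le k\cdot\OPT$ by solving to optimality the \emph{fair $k$-median assignment} problem on the same centers $C$ -- that is, finding a fair $\varphi\colon P\to C$ that minimizes $\sum_{c\in C}\sum_{x\colon\varphi(x)=c} d(x,c)$ -- and taking $M$ to be its optimum. Writing $L_c(\varphi)=\sum_{x\colon\varphi(x)=c} d(x,c)$, every fair $\varphi$ obeys the trivial sandwich $\max_c L_c(\varphi)\le \sum_c L_c(\varphi)\le k\cdot\max_c L_c(\varphi)$. Applying the left inequality to the optimal median assignment yields $\OPT\le M$, and applying the right inequality to the optimal max-load assignment yields $M\le k\cdot\OPT$. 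The fair $k$-median assignment problem is expressible as an MILP with $O(k\ell)$ integer variables $n_{c,i}$ recording the number of color-$i$ points assigned to each cluster $c$ (as in the fair $k$-median MILP of~\cite{abs-2007-10137}); once the $n_{c,i}$ are fixed, choosing which actual points go where decouples across colors, each color yielding a transportation LP with a totally unimodular constraint matrix and hence an integer optimum in polynomial time. Applying a Lenstra-type algorithm for mixed-integer programming on the $O(k\ell)$ integer variables then solves the whole MILP in $(k\ell)^{O(k\ell)} n^{O(1)}$ time, producing $M$.

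Next, I would binary-search over the multiplicative grid $B_j=M/(1+\epsilon/3)^j$ for $j=0,1,\ldots,m$ with $m=\lceil\log_{1+\epsilon/3}k\rceil=O(\log k/\epsilon)$, so that $B_0=M\ge\OPT$ and $B_m\le M/k\le\OPT$. The outcomes of $\mathcal A(\cdot,\epsilon/3)$ along the grid are monotone in $j$: if $\mathcal A(B_j)$ returns feasible, then $\OPT\le(1+\epsilon/3)B_j=B_{j-1}$, so $\mathcal A(B_{j-1})$ cannot correctly claim infeasibility and must also return feasible. Binary search therefore locates the smallest $B^\ast=B_{j^\ast}$ on which $\mathcal A$ succeeds, using $O(\log m)=O_\epsilon(\log\log k)$ calls (equivalently, a linear scan uses $O_\epsilon(\log k)$ calls). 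For the approximation ratio, either (i) $j^\ast<m$, in which case $\mathcal A(B_{j^\ast+1})$ failed and thus $\OPT>B_{j^\ast+1}=B^\ast/(1+\epsilon/3)$, so the returned cost is at most $(1+\epsilon/3)B^\ast<(1+\epsilon/3)^2\OPT\le(1+\epsilon)\OPT$; or (ii) $j^\ast=m$, in which case $B^\ast=B_m\le\OPT$ and the returned cost is at most $(1+\epsilon/3)B^\ast\le(1+\epsilon/3)\OPT\le(1+\epsilon)\OPT$. The $(1+\epsilon/3)^2\le 1+\epsilon$ step is valid for $\epsilon\le 3$; otherwise, rescale $\epsilon$ at the start.

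The hard part will be establishing $M$ in the promised time. If the MILP/Lenstra/transportation-LP recipe sketched above has technical subtleties -- for instance, in how the chosen MILP algorithm interacts with the continuous $x_{pc}$ variables alongside the integer $n_{c,i}$ -- one can instead invoke any constant-factor approximation for fair $k$-median assignment as a black box: the weaker bound $\OPT\le M\le O(k)\cdot\OPT$ only inflates the grid by a constant factor, preserving both the claimed running time and the final $(1+\epsilon)$-approximation guarantee.
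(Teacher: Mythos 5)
Your proposal is correct and follows essentially the same route as the paper's proof: bound $\OPT$ within a factor $k$ by solving the fair $k$-median assignment exactly via an MILP with $O(k\ell)$ integer variables in $(k\ell)^{O(k\ell)}n^{O(1)}$ time, then search a multiplicative $(1+\epsilon/3)$-grid spanning that range with $O_\epsilon(\log k)$ calls to $\mathcal{A}$ and combine the two $(1+\epsilon/3)$ losses. Your explicit monotonicity argument (enabling genuine binary search) and the spelled-out MILP construction are minor elaborations the paper leaves implicit, not a different approach.
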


\begin{proof}
The idea is to first find a range where OPT belongs and then apply $\mathcal{A}$ with budget within this range to find a feasible assignment. Given an instance $I$ of $(\alpha,\beta)$-fair assignment, first we use an algorithm (Theorem 8.2, \cite{abs-2007-10137}) to compute a fair assignment of the points to the centers of $C$ that minimizes the $(\alpha,\beta)$-fair $k$-median cost. This algorithm runs in time $(k\ell)^{O(k\ell)} n^{O(1)}$. Let $D$ be the computed $(\alpha,\beta)$-fair $k$-median cost returned by the algorithm. Then $D \le k\cdot $OPT, as the optimal cost of $(\alpha,\beta)$-fair assignment is at least $1/k$ fraction of the optimal $(\alpha,\beta)$-fair $k$-median cost. Also, $OPT \le D$, as optimal $(\alpha,\beta)$-fair assignment cost is at most the optimal $(\alpha,\beta)$-fair $k$-median cost. Hence $D/k\le$ OPT $\le D$. 

Let $\epsilon' = \epsilon/3$ and $m$ be the maximum $t$ such that $(1+\epsilon')^t \le D/k$. Also, let $M$ be the minimum $t$ such that $D \le (1+\epsilon')^t$. Thus  $(1+\epsilon')^m\le $ OPT $\le (1+\epsilon')^M$. We run the algorithm $\mathcal{A}$ setting $\epsilon$ to be $\epsilon'$ for budget $B=(1+\epsilon')^i$ where $m\le i\le M$, and terminate it the first time it returns a feasible assignment for some budget $B$. Let $B'$ be the budget for which this algorithm returns a feasible assignment. Then $B' \le (1+\epsilon')$ OPT, as any instance with budget $B \ge $ OPT is a yes-instance, and for such a $B$, $\mathcal{A}$ returns a feasible assignment of cost at most $(1+\epsilon)B$. Hence, the cost of the assignment returned by $\mathcal{A}$ with budget $B'$ is at most  $(1+\epsilon')^2$ OPT $\le (1+\epsilon)$ OPT. As the algorithm $\mathcal{A}$ can be used at most $O_{\epsilon}(\log (M-m+1))=O_{\epsilon}(\log (D/(D/k)))=O_{\epsilon}(\log k)$ times, the whole algorithm runs in time $(k\ell)^{O(k\ell)} n^{O(1)}+O_{\epsilon}(\log k)\cdot  T(n,k,\ell,\epsilon/3)$.   
\end{proof}

In the following, we design an LP rounding based algorithm for budgeted $(\alpha,\beta)$-fair assignment with the properties required in the above lemma. Moreover, this algorithm runs in time $k^{O(k\ell)} {\ell}^{O((k\ell^2/\epsilon)\log (\ell/\epsilon))} n^{O(1)}$. Hence, we obtain the following theorem. 

\begin{theorem}\label{thm:fairassgn}
For any $\epsilon > 0$, a $(1+\epsilon)$-approximation for $(\alpha,\beta)$-fair assignment can be obtained in time $k^{O(k\ell)} {\ell}^{O((k\ell^2/\epsilon)\log (\ell/\epsilon))} n^{O(1)}$.
\end{theorem}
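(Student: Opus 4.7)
The strategy is to combine the binary search framework from the preceding lemma with a subroutine $\mathcal{A}$ that, given a target budget $B$, either returns a fair assignment of cost at most $(1+\epsilon)B$ or certifies that no fair assignment of cost $\le B$ exists. Plugging such an $\mathcal{A}$ with running time $k^{O(k\ell)}\ell^{O((k\ell^2/\epsilon)\log(\ell/\epsilon))} n^{O(1)}$ into the lemma (with $\epsilon$ replaced by $\epsilon/3$) yields Theorem~\ref{thm:fairassgn}, since the $(k\ell)^{O(k\ell)}$ and $O_\epsilon(\log k)$ overheads from the lemma are absorbed into the claimed bound.

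At the core of $\mathcal{A}$ is the natural LP relaxation of budgeted fair assignment, with variables $x_{p,c}\in[0,1]$, assignment equalities $\sum_c x_{p,c}=1$, load inequalities $\sum_p d(p,c)\,x_{p,c}\le B$, the $(\alpha,\beta)$-fairness constraints, and $x_{p,c}=0$ whenever $d(p,c)>B$ (such an assignment alone would exceed the budget). A basic feasible solution has at most $n+k+2k\ell$ nonzero entries, and since each point contributes at least one nonzero to be fully assigned, only $O(k\ell)$ points can be fractional. Enumerating integer assignments for these $O(k\ell)$ points, each with at most $k$ choices of center, contributes the $k^{O(k\ell)}$ factor. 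The difficulty is that naive rounding can violate both the load budget and the fairness constraints, so extra structure has to be guessed before the LP is solved.

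To control fairness and load simultaneously, I would introduce a two-scale guess. Call a potential assignment \emph{heavy} if $d(p,c)\ge \epsilon B/\ell$; any solution of cost $\le B$ uses at most $\ell/\epsilon$ heavy assignments per center, and hence at most $k\ell/\epsilon$ heavy assignments overall. For each heavy assignment I guess its color (out of $\ell$ options), and for each (center, color) pair I additionally guess the number of light color-$i$ points assigned to that center, rounded to a multiple of $(1+\epsilon/\ell)$. A careful accounting shows the combined guess space has size at most $\ell^{O((k\ell^2/\epsilon)\log(\ell/\epsilon))}$. I then solve the LP restricted to be consistent with the current guess: the guessed per-center color counts replace the fairness inequalities, so fairness is automatically satisfied by any integral assignment matching the guess. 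Rounding the $O(k\ell)$ fractional light assignments to arbitrary target centers increases the load of each center by at most $O(k\ell)\cdot \epsilon B/\ell = O(\epsilon k B)$, which after a final rescaling of $\epsilon$ fits inside $(1+\epsilon)B$.

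The main technical obstacle is to choose the discretization so that three properties hold at once: (i) some guess is consistent with OPT up to the allowed multiplicative error, (ii) the integral completion of the fractional part of the LP can be chosen to exactly realise the guessed color counts at each center, so fairness is inherited from the guess rather than argued separately, and (iii) the load increase from rounding the light fractional part stays within $O(\epsilon) B$ per center. Ensuring (ii) may require swapping fractional assignments along augmenting paths to repair mismatched column sums rather than rounding arbitrarily, and reconciling this with (iii) is where most of the work of the proof will live.
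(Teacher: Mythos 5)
Your high-level skeleton (binary search over the budget via the preceding lemma, a guessing step for the expensive part of the assignment, an LP for the rest, and a rounding phase) matches the paper's, but there are two concrete gaps that the proposed scheme does not close. First, fairness must hold \emph{exactly} --- the problem allows approximation only in the load, not in the constraints $\beta_g|\varphi^{-1}(c)| \le |\varphi^{-1}(c)\cap P_g| \le \alpha_g|\varphi^{-1}(c)|$. Guessing the per-(center, color) counts only up to a multiplicative $(1+\epsilon/\ell)$ factor therefore does not let you ``replace the fairness inequalities by the guess'': an integral assignment realizing the rounded counts need not be fair, and enumerating the counts exactly costs $n^{\Theta(k\ell)}$, which is not FPT. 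The paper avoids this by keeping the counts as $k\ell$ \emph{integer variables} $y_{gi}$ in a mixed-integer LP solved via Lenstra's algorithm in $(k\ell)^{O(k\ell)}n^{O(1)}$ time, and then rounding the fractional $x$-part with an integral flow whose sink-side capacities are exactly $y_{gi}^*$; every center then receives exactly $y_{gi}^*$ points of group $g$ and fairness is inherited verbatim. Some device of this kind (integer variables, or a rounding that provably preserves exact counts) is required; your suggestion of augmenting-path repairs points in the right direction, but it is the crux of the argument rather than a detail to be deferred.

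Second, the load analysis of the rounding is too lossy for the claimed running time. The $O(k\ell)$ fractional points surviving a basic feasible solution can all be incident to the \emph{same} center, so rounding them costs that center up to $O(k\ell)\cdot\epsilon B/\ell = O(k\epsilon B)$, as you note; rescaling $\epsilon$ by a factor of $k$ to absorb this inflates the guess space to $\ell^{O((k^2\ell^2/\epsilon)\log(k\ell/\epsilon))}$, exceeding the stated bound. The paper instead rounds within geometric distance classes, so the per-center overhead is bounded by the geometric sum $\sum_{t<0}(1+\epsilon)^t\epsilon^2 B = O(\epsilon B)$ per group, i.e.\ $O(\epsilon\ell B)$ in total, and only a rescaling by $\ell$ is needed. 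Relatedly, guessing only the \emph{colors} of heavy assignments does not control their load contribution after rounding: a heavy point left fractional by the LP and then rounded can add up to $B$ to a center's load. The paper guesses, for each center, group, and discretized distance class, the exact number $z_{i,t,g}$ of costly points assigned, which pins the costly part of each load to $\sum_t d_t z_{i,t,g}$ with no rounding error at all.
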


Next, we design the algorithm for the budgeted version of $(\alpha,\beta)$-fair assignment. Recall that in the budgeted version, we are given an instance $I$ containing $\ell$ disjoint groups $\{P_i\}$ of $P=\{p_1,\ldots,p_n\}$, a set of $k$ centers $C=\{c_1,\ldots,c_k\}$ and the budget $B$. Our algorithm first rounds each distance to a power of $(1+\epsilon)$. Fix any center $c_i\in C$. We partition the points in $P$ into a number of classes based on their distances from $c_i$. For all $p\in P$, let $\hat{d}(p,c_i)=(1+\epsilon)^t\epsilon^2 B$ where $t=\lceil\log_{1+\epsilon} d(p,c_i)/(\epsilon^2 B)\rceil$.  Let $d_t=(1+\epsilon)^t\epsilon^2 B$. We refer to the points $p$ with distance $\hat{d}(p,c_i)=d_t$ as the \emph{distance class} $t$ with respect to (w.r.t.) $c_i$, which is denoted by $S_{it}$. 

\begin{observation}
For all $p\in P, c\in C$, ${d}(p,c) \le \hat{d}(p,c) \le (1+\epsilon)\cdot {d}(p,c)$. 
\end{observation}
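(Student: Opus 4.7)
The plan is to prove this as a direct consequence of the definition of the ceiling function applied to the exponent $t$. Setting $x = \log_{1+\epsilon}\bigl(d(p,c)/(\epsilon^2 B)\bigr)$, I first invoke the standard inequality $x \le \lceil x \rceil < x+1$ to conclude
\[
\log_{1+\epsilon}\!\bigl(d(p,c)/(\epsilon^2 B)\bigr) \;\le\; t \;<\; \log_{1+\epsilon}\!\bigl(d(p,c)/(\epsilon^2 B)\bigr) + 1.
\]
Since $1+\epsilon > 1$, the map $z \mapsto (1+\epsilon)^z$ is strictly increasing, so exponentiating preserves the inequalities and yields
\[
\frac{d(p,c)}{\epsilon^2 B} \;\le\; (1+\epsilon)^t \;<\; (1+\epsilon)\cdot\frac{d(p,c)}{\epsilon^2 B}.
\]
Multiplying through by $\epsilon^2 B > 0$ gives exactly $d(p,c) \le \hat{d}(p,c) \le (1+\epsilon)\, d(p,c)$, as required.

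The only step that requires any care is the edge case $d(p,c) = 0$, where $\log_{1+\epsilon} 0$ is undefined; I would handle this by the natural convention $\hat{d}(p,c) = 0$ in that situation, which trivially satisfies both inequalities. Apart from this, the proof is a one-line unpacking of the rounding definition, so I expect no real obstacle and will present it as a short direct calculation rather than a structured argument.
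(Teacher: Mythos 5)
Your proof is correct and is exactly the intended justification: the paper states this observation without proof, and the direct unpacking of the ceiling in the definition of $\hat{d}$ (together with monotonicity of $z \mapsto (1+\epsilon)^z$) is precisely the omitted one-line argument, with your handling of the degenerate case $d(p,c)=0$ being a sensible addition. In fact your derivation gives the slightly stronger strict upper bound $\hat{d}(p,c) < (1+\epsilon)\,d(p,c)$, which of course implies the stated inequality.
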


Let $I'$ be the new instance of budgeted $(\alpha,\beta)$-fair assignment with the modified distance $\hat{d}$. As $\hat{d}$ is  obtained by scaling $d$ by at most a factor of $(1+\epsilon) $, we have the following observation. 

\begin{observation}
If there is a feasible assignment for $I$ with budget $B$, then there is a feasible assignment for $I'$ with budget $(1+\epsilon) B$. Also, if there is a feasible assignment for $I'$ with budget $(1+\epsilon) B$, then there is a feasible assignment for $I$ with budget $(1+\epsilon) B$.
\end{observation}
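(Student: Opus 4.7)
The plan is to observe that the fairness constraints of Definition~\ref{definition:fair} depend only on the combinatorial structure of the assignment $\varphi$ (i.e., how many points of each group are assigned to each center) and not on the metric. Hence, any assignment $\varphi$ is feasible (with respect to fairness) in $I$ if and only if it is feasible in $I'$. So the content of the observation is purely about how the cost changes when one replaces $d$ by $\hat{d}$.

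For the forward direction, I would take an assignment $\varphi$ that is feasible for $I$ with budget $B$, and show the same $\varphi$ is feasible for $I'$ with budget $(1+\epsilon)B$. For every center $c \in C$, I apply the right inequality of the preceding observation pointwise:
\[
\sum_{x \in P : \varphi(x) = c} \hat{d}(x, c) \;\le\; (1+\epsilon) \sum_{x \in P : \varphi(x) = c} d(x, c) \;\le\; (1+\epsilon) B.
\]
Taking the maximum over $c \in C$ gives $\cost_{\hat d}(\varphi) \le (1+\epsilon) B$, while fairness is inherited automatically, as discussed.

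For the backward direction, I would use the left inequality $d(p,c) \le \hat{d}(p,c)$ in the same way: if $\varphi$ is feasible for $I'$ with budget $(1+\epsilon)B$, then for each $c\in C$,
\[
\sum_{x \in P : \varphi(x) = c} d(x, c) \;\le\; \sum_{x \in P : \varphi(x) = c} \hat{d}(x, c) \;\le\; (1+\epsilon) B,
\]
so $\cost_d(\varphi) \le (1+\epsilon) B$. Again, fairness transfers with no loss.

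There is essentially no obstacle here; the only conceptual point to state clearly is the invariance of the fairness constraints under the distance perturbation, so that the same assignment $\varphi$ can be used as a witness in both directions. The factor $(1+\epsilon)$ in the backward direction is not tight (one could state it as $B$), but the slightly weaker form written is all that is needed later, since subsequent steps will work with the $(1+\epsilon)B$ budget on $I'$ anyway.
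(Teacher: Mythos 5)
Your proof is correct and takes the same route as the paper, which states this observation without proof, justifying it in one line by the fact that $\hat{d}$ is obtained from $d$ by scaling up by at most a factor of $(1+\epsilon)$; your write-up simply makes explicit the two pointwise inequalities and the metric-independence of the fairness constraints. One quibble: your parenthetical claim that the backward direction could conclude budget $B$ for $I$ is false as stated --- from $\sum_{x:\varphi(x)=c}\hat{d}(x,c)\le(1+\epsilon)B$ and $d\le\hat{d}$ one only gets $(1+\epsilon)B$ for $I$ (e.g., when $d(p,c)$ already equals a rounded value, $\hat{d}=d$ and no factor is gained), though this side remark does not affect the validity of the proof of the observation as stated.
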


By the above observation, it is sufficient to consider $\hat{d}$ instead of $d$ for the purpose of computing an assignment of cost at most $(1+\epsilon) B$. Henceforth, by distance we mean $\hat{d}$. 

Denote by $\varphi^*$ a feasible assignment for $I'$ of cost at most $(1+\epsilon) B$ (if any). We define a point $p \in P$ to be \emph{costly} w.r.t. a center $c\in C$ if $\hat{d}(p,c) \ge \epsilon^2 B$. Otherwise, we define the point to be \emph{cheap} w.r.t. $c$. Note that the number of costly points that can be assigned to each center in $\varphi^*$ is at most $(1+\epsilon)/\epsilon^2\le 2/\epsilon^2$. The next observation follows from the fact that $d_t=(1+\epsilon)^t\epsilon^2 B$. 

\begin{observation}
For any point $p$ and center $c\in C$ with $\hat{d}(p,c)=d_t$, $t < 0$ if $p$ is cheap w.r.t. $c$, and $t \ge 0$ if $p$ is costly w.r.t. $c$. 
\end{observation}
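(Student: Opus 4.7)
The plan is to unpack the two definitions involved and perform a direct algebraic manipulation; there is no real obstacle here, just a routine comparison of the threshold $\epsilon^2 B$ with $d_t = (1+\epsilon)^t \epsilon^2 B$.

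First I would recall the setup: by construction $\hat{d}(p,c) = d_t$ with $d_t = (1+\epsilon)^t \epsilon^2 B$, and the cheap/costly dichotomy is decided purely by whether $\hat{d}(p,c) < \epsilon^2 B$ or $\hat{d}(p,c) \ge \epsilon^2 B$. Dividing both sides of either inequality by the positive quantity $\epsilon^2 B$ converts the dichotomy into a comparison between $(1+\epsilon)^t$ and $1$. Since $1+\epsilon > 1$, the function $t \mapsto (1+\epsilon)^t$ is strictly increasing, so $(1+\epsilon)^t < 1$ is equivalent to $t < 0$ and $(1+\epsilon)^t \ge 1$ is equivalent to $t \ge 0$.

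Therefore I would write the argument as a two-line case split. If $p$ is cheap w.r.t.\ $c$, then $(1+\epsilon)^t \epsilon^2 B = \hat{d}(p,c) < \epsilon^2 B$, giving $(1+\epsilon)^t < 1$ and hence $t < 0$. Conversely, if $p$ is costly w.r.t.\ $c$, then $(1+\epsilon)^t \epsilon^2 B = \hat{d}(p,c) \ge \epsilon^2 B$, giving $(1+\epsilon)^t \ge 1$ and hence $t \ge 0$. This completes the proof; no additional machinery beyond the definitions of $d_t$, cheap, and costly is needed.
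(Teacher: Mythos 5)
Your argument is correct and is exactly the intended one: the paper gives no explicit proof, noting only that the observation ``follows from the fact that $d_t=(1+\epsilon)^t\epsilon^2 B$,'' and your unpacking of the cheap/costly threshold $\epsilon^2 B$ against $d_t$ via monotonicity of $t\mapsto(1+\epsilon)^t$ is precisely that routine verification. Nothing is missing.
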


Now, as we are shooting for an assignment of cost at most $(1+\epsilon) B$, we can discard all the distances $\hat{d}(p,c)$ larger than $(1+\epsilon) B$, i.e., we can assume that such a $p$ will never be assigned to $c$. Without loss of generality, we assume that all the distances we have are bounded by $(1+\epsilon) B$. Let $\Delta$ be the maximum $t$ such that there are $p\in P$ and $c\in C$ with $\hat{d}(p,c)=d_t$ for a costly point $p$ w.r.t. $c$. By our previous assumption, $\hat{d}(p,c)\le (1+\epsilon) B$. Thus $\Delta\le \lceil\log_{1+\epsilon} ((1+\epsilon)B/(\epsilon^2 B))\rceil =O((1/\epsilon)\log (1/\epsilon))$, which is a constant. For $1\le i\le k$, $0 \le t \le \Delta$ and $1\le g\le \ell$, let $z_{i,t,g}$ be the number of costly points $p\in P_g$  assigned to $c_i$ in $\varphi^*$ such that $\hat{d}(p,c_i)=d_t$. Note that each $z_{i,t,g} \le 2/\epsilon^2$, as the total number of costly points assigned to a center is at most $2/\epsilon^2$. Thus the total number of distinct choices for these variables is $(2/\epsilon^2)^{k \ell\Delta}=(1/\epsilon)^{O((k\ell/\epsilon)\log (1/\epsilon))}.$ 

\begin{observation}\label{obs:z-values}
There are $(1/\epsilon)^{O((k\ell/\epsilon)\log (1/\epsilon))}$ distinct choices for the variables $\{z_{i,t, g}: 1\le i\le k, 0 \le t \le \Delta, 1\le g\le \ell\}$. 
\end{observation}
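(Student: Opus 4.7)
My plan is to establish this as a direct counting argument, combining three ingredients that are already available from the preceding discussion: the per-variable range, the total number of variables, and the bound on $\Delta$.

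First I would note that each individual variable $z_{i,t,g}$ counts costly points assigned to a single center $c_i$, so by the bound stated just before the observation (at most $2/\epsilon^2$ costly points can be assigned to any one center under a budget-$(1+\epsilon)B$ assignment), we have $z_{i,t,g} \in \{0, 1, \ldots, \lfloor 2/\epsilon^2 \rfloor\}$. Hence each variable admits at most $2/\epsilon^2 + 1 = O(1/\epsilon^2)$ distinct values.

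Next I would count the number of variables. The indices range over $i \in [k]$, $g \in [\ell]$, and $t \in \{0, 1, \ldots, \Delta\}$, giving $k \cdot \ell \cdot (\Delta + 1)$ variables in total. Using $\Delta = O((1/\epsilon)\log(1/\epsilon))$ from the paragraph above, this count is $O\!\left(k\ell/\epsilon \cdot \log(1/\epsilon)\right)$. Therefore the number of distinct assignments to the full tuple $\{z_{i,t,g}\}$ is bounded by
\[
\left(\tfrac{2}{\epsilon^2} + 1\right)^{k\ell(\Delta+1)} \;\le\; \left(\tfrac{1}{\epsilon}\right)^{O(k\ell(\Delta+1))} \;=\; \left(\tfrac{1}{\epsilon}\right)^{O((k\ell/\epsilon)\log(1/\epsilon))},
\]
where I absorb the constant $2$ in the base $2/\epsilon^2$ into the $O(\cdot)$ in the exponent (since $\log(2/\epsilon^2) = O(\log(1/\epsilon))$), and similarly absorb the $+1$ in $\Delta+1$.

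There is no real obstacle here; the only thing to be careful about is that both the per-variable range and the range of $t$ contribute logarithmic factors in $1/\epsilon$, and these must be multiplied correctly inside the exponent. The calculation is a routine application of $a^{bc} = (a^b)^c$ together with $\log(1/\epsilon^{O(1)}) = O(\log(1/\epsilon))$, and it yields precisely the stated bound.
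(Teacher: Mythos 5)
Your argument is correct and is exactly the paper's own (one-line) justification: each $z_{i,t,g}$ ranges over $O(1/\epsilon^2)$ values, there are $k\ell(\Delta+1)$ such variables with $\Delta=O((1/\epsilon)\log(1/\epsilon))$, and the product telescopes to $(1/\epsilon)^{O((k\ell/\epsilon)\log(1/\epsilon))}$. You are just slightly more explicit about absorbing the constants and the $+1$ terms into the $O(\cdot)$, which is fine.
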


As we can probe all such possible choices, we assume that we know the exact values of these variables in $\varphi^*$. Next, we describe a Mixed-Integer Linear Program (MILP) for the budgeted version of the problem, which is partly motivated by the $(\alpha,\beta)$-fair $k$-median MILP \cite{abs-2007-10137}. For every point $p_j$ and center $c_i$, we have a fractional variable $x_{ij}$ denoting the extent up to which $p_j$ is assigned to $c_i$. For every center $c_i$ and group $g \in \{1, \ldots, \ell\}$, we have an integral variable $y_{gi}$ denoting the ``weight'' of the points assigned to $c_i$ from group $g$. The constraints of the MILP are described as follows. The Constraint \ref{eq:ilp1} ensures that each point is assigned to the centers up to an extent of 1. Constraint \ref{eq:ilp2} ensures that the weight assigned from each group $g$ to each center $c_i$ is exactly $y_{gi}$. Constraints \ref{eq:ilp3} and \ref{eq:ilp4} are fairness constraints. Constraint \ref{eq:ilp5} ensures that the weight of costly points from each class $t$ and group $g$ assigned to each $c_i$ is exactly same as the guessed value $z_{i,t,g}$. Constraint \ref{eq:ilp6} ensures that the total load assigned to each $c_i$ is bounded by $(1+\epsilon)B$. The first and the second expressions on the left hand side of this constraint are corresponding to costly and cheap points, respectively.       
\begin{align}
    &\sum_{1 \le i \le k} x_{ij} = 1  &\forall j \in [n] \label{eq:ilp1}\\
    &\sum_{j \in [n] : p_j \in P_g} x_{ij} = y_{gi}  &\forall i \in [k],\  \forall g \in [\ell], \label{eq:ilp2}\\
    &y_{gi}\ge \beta_g\sum_{j \in [n]} x_{ij} 
    &\forall i \in [k],\ \forall g \in [\ell]\label{eq:ilp3}\\
    &y_{gi}\le \alpha_g\sum_{j \in [n]} x_{ij}
    &\forall i \in [k],\ \forall g \in [\ell]\label{eq:ilp4}\\
    &\sum_{p_j\in P_g\cap S_{it}} x_{ij} = z_{i,t,g} &\forall i \in [k],\ \forall t \in \{0, \ldots, \Delta\},\ \forall g \in [\ell]\label{eq:ilp5}\\
    &\sum_{g=1}^{\ell}\sum_{0 \le t \le \Delta} d_t z_{i,t,g}+\sum_{t < 0} d_t \sum_{p_j\in S_{it}} x_{ij} \le (1+\epsilon)B &\forall i \in [k] \label{eq:ilp6}\\
    &x_{ij} \ge 0  &\forall j \in [n],\  \forall i \in [k], \label{eq:ilp7}\\
    &y_{gi} \in \mathbb{Z}_{\ge 0}  &\forall i \in [k],\  \forall g \in [\ell].
\end{align}

Let us denote the above MILP by Fair-LP. A solution to Fair-LP is denoted by $(x,y)$. We note that the assignment $\varphi^*$ induces a feasible solution to Fair-LP. We use the following popular and celebrated result to solve this MILP. 

\begin{proposition}[\cite{Lenstra1983}, \cite{Kannan1987}, \cite{Frank1987}]
    \label{proposition:MILP}
    An MILP with $K$ integral variables and encoding size $L$, can be solved in time $K^{O(K)} L^{O(1)}$.
\end{proposition}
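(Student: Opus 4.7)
The plan is to reduce MILP to pure integer programming with a polynomial-time feasibility oracle, and then apply the geometry-of-numbers machinery of Lenstra and Kannan. Since the objective is linear, standard binary search (or incorporating the objective into a constraint $c^T(x,y) \le \tau$ and searching over $\tau$) reduces the optimization problem to $L^{O(1)}$ feasibility queries on an MILP of the same size, so it suffices to solve the feasibility version. Write $P = \{(x,y) \in \mathbb{R}^{n-K} \times \mathbb{R}^K : A x + B y \le b\}$ for the feasible polytope and let $P' = \{y \in \mathbb{R}^K : \exists x, (x,y) \in P\}$ be its projection onto the integer coordinates. Then the MILP is feasible if and only if $P' \cap \mathbb{Z}^K \ne \emptyset$, since once $y \in P'$ is fixed and is integral, an admissible continuous $x$ can be recovered in polynomial time by linear programming (ellipsoid or interior point). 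Although $P'$ may have exponentially many facets if written explicitly, a single LP decides membership $y \in P'$ in time $\poly(L)$, which is all that is needed in the sequel.

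The next step is to find an integer point in the polyhedron $P' \subseteq \mathbb{R}^K$ using Lenstra's algorithm with Kannan's improved analysis. First compute, via the membership oracle, a L\"owner--John ellipsoid $E$ that sandwiches $P'$ with ratio $\poly(K)$. Using LLL-type basis reduction on the lattice $\mathbb{Z}^K$ in the inner-product induced by $E$, either produce a lattice point that lies in $P'$ (done), or produce a short dual vector $v \in \mathbb{Z}^K$ such that $P'$ is ``flat'' in direction $v$. Khinchin's flatness theorem, together with the quality of the LLL basis, guarantees that in the flat case the set $\{v^{\top} y : y \in P'\}$ is an interval of length at most $w(K) = K^{O(K)}$. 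Hence every integer $y \in P'$ satisfies $v^{\top} y = t$ for some integer $t$ in this interval, and we can branch on the $w(K)$ choices of $t$, each branch cutting the dimension by one via the substitution $v^{\top} y = t$.

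Recursing on the base dimension $K$, the search tree has depth $K$ and branching factor $K^{O(K)}$, which naively gives $K^{O(K^2)}$ leaves; Kannan's refinement of the flatness bound together with a careful amortisation across recursion levels brings the total number of recursive calls down to $K^{O(K)}$. Each call performs a bounded number of LP calls and an LLL reduction in dimension at most $K$ on integers of bit-length $L^{O(1)}$, all of which cost $\poly(K, L)$. Multiplying through yields the claimed time bound $K^{O(K)} L^{O(1)}$; reintroducing the continuous variables at the successful leaf by one final LP produces an actual optimal $(x,y)$ within the same bound.

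The main technical obstacle is the flatness/dimension-reduction step: proving that an LLL-reduced basis with respect to the ellipsoidal norm yields a direction $v$ along which $P'$ has width $K^{O(K)}$, and verifying that the constants compose correctly through the recursion to give $K^{O(K)}$ rather than a doubly exponential bound. This is precisely the content of Lenstra's original paper together with Kannan's tightening of the flatness constant, both of which can be invoked as a black box once the MILP has been re-expressed as the ILP-with-oracle problem above.
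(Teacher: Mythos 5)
The paper does not prove this proposition at all; it imports it as a black box from Lenstra, Kannan, and Frank--Tardos, which is exactly the machinery your sketch reconstructs (project out the continuous variables, decide membership in the projection by an LP, then apply Lenstra--Kannan flatness-based branching). Your outline is a correct account of that standard argument, so it matches the paper's (cited) approach; the only cosmetic quibble is that the width bound per level is singly exponential in $K$ (polynomial in $K$ after Kannan's reduction), not $K^{O(K)}$, but this does not affect the final $K^{O(K)}L^{O(1)}$ bound.
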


As Fair-LP has $k\ell$ integral variables $\{y_{gi}\}$ and polynomial encoding size, it can be solved in $(k\ell)^{O(k\ell)} n^{O(1)}$ time. If this algorithm outputs that there is no feasible solution to Fair-LP, we conclude that $I$ is a no-instance. Otherwise, let $(x^*,y^*)$ denote the feasible solution returned by this algorithm. Note that although the $y^*$ values are integral, $x^*$ values can very well be fractional. Next, we show how to round these variables to obtain an integral solution to Fair-LP such that the load of every center is increased by an additive factor of $O(\epsilon \ell B)$ compared to its load in $(x^*,y^*)$.  

Fix any group $g$. First, we show how to round the variables corresponding to the points of $P_g$. For this purpose, we construct a network $G_N=(V_N,E_N)$ with source $S$ and sink $T$ (see Figure \ref{fig:net}). For each point $p_j\in P_g$, there is a node $v_j$ in $V_N$. For each distance class $t$ of every center $c_i$, there is a node $w_{it}$. Also, for each center $c_i$, there is a node $u_i$. For each $v_j\in V_N$, there is an arc $(S,v_j)$ of capacity 1. For each $p_j\in P_g$ and center $c_i$, there is an arc $(v_j,w_{it})$ of capacity 1 where $t$ is the index such that $p_j \in S_{it}$, i.e, $\hat{d}(p_j,c_i)=d_t$. For center $c_i$ and distance class $t$, let $\lambda_{it}^g=\sum_{p_j\in P_g\cap S_{it}} x_{ij}^*$, i.e, the weight assigned from $P_g\cap S_{it}$ to $c_i$. For each node $w_{it}$, there is an arc $(w_{it},u_i)$ of capacity $\lceil \lambda_{it}^g\rceil$. Lastly, for each center $c_i$, there is an arc $(u_i,T)$ of capacity $y_{gi}^*$. 

Note that for each center $c_i$, the number of distance classes is at most the number of points $n$. Hence, the size of $V_N$ is a polynomial in $n$. Also, note that the solution $(x^*,y^*)$ projected on the points of $P_g$ induces a feasible fractional solution for the problem of computing a flow of value $|P_g|$ in $G_N$. 

\begin{observation}
 The network $G_N$ has a fractional flow of value $|P_g|$. 
\end{observation}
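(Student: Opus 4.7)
The plan is to exhibit a specific fractional flow obtained directly from the MILP solution $(x^*, y^*)$ and verify flow conservation together with the capacity bounds. The natural construction is to send one unit along every source arc $(S, v_j)$, flow of value $x_{ij}^*$ along each arc $(v_j, w_{it})$ for $p_j \in P_g \cap S_{it}$, flow $\lambda_{it}^g$ along each arc $(w_{it}, u_i)$, and $y_{gi}^*$ along each sink arc $(u_i, T)$.

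Verification proceeds layer by layer. Flow conservation at $v_j$ reduces to Constraint \ref{eq:ilp1}: the one incoming unit matches $\sum_i x_{ij}^* = 1$. Conservation at $w_{it}$ is immediate from the definition of $\lambda_{it}^g = \sum_{p_j \in P_g \cap S_{it}} x_{ij}^*$. Conservation at $u_i$ reduces to Constraint \ref{eq:ilp2}, since $\sum_t \lambda_{it}^g = \sum_{p_j \in P_g} x_{ij}^* = y_{gi}^*$ where the sum over $t$ uses that for each fixed $c_i$ the sets $\{P_g \cap S_{it}\}_t$ partition $P_g$. For the capacity constraints, the source and sink arcs are used at exactly their capacities $1$ and $y_{gi}^*$; each $(v_j, w_{it})$ arc carries $x_{ij}^* \le 1$ by Constraint \ref{eq:ilp1} and nonnegativity; each $(w_{it}, u_i)$ arc carries $\lambda_{it}^g \le \lceil \lambda_{it}^g \rceil$ by definition of the ceiling. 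The total flow value equals the $|P_g|$ units pushed out of $S$.

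There is no substantive obstacle here: the observation is essentially bookkeeping, translating the LP equalities into the flow-conservation equalities of $G_N$. The only minor point to note is that for each fixed $c_i$ every point $p_j \in P_g$ lies in exactly one distance class $S_{it}$, which holds by construction since $\hat{d}(p_j, c_i)$ takes a unique rounded value. The reason the capacities of the middle arcs are defined as $\lceil \lambda_{it}^g \rceil$ rather than $\lambda_{it}^g$ is to guarantee integrality of all arc capacities; this is what sets up the next step (outside the scope of this observation), where integrality of max-flow on an integer-capacitated network will convert this fractional flow into an integral flow of the same value $|P_g|$, yielding the rounded assignment of the points of $P_g$ with a controlled overshoot in load.
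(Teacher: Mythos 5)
Your proof is correct and matches the paper's (implicit) argument: the paper simply remarks that $(x^*,y^*)$ projected onto $P_g$ induces a feasible fractional flow of value $|P_g|$, which is exactly the flow you construct and verify layer by layer via Constraints \ref{eq:ilp1} and \ref{eq:ilp2} and the definition of $\lambda_{it}^g$. Your write-up just makes the bookkeeping explicit.
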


As all the capacities of the arcs are integral, by integrality of flow, there exists an integral feasible flow in $G_N$ of value $|P_g|$. We compute such a flow $f$ by using any polynomial time flow computation algorithm. This flow solution $f$ naturally gives us an integral assignment $\varphi_f$ of the points in $P_g$ to the centers in $C$. 

\begin{figure}[!ht]
		\begin{center}
			\includegraphics[width=.8\textwidth]{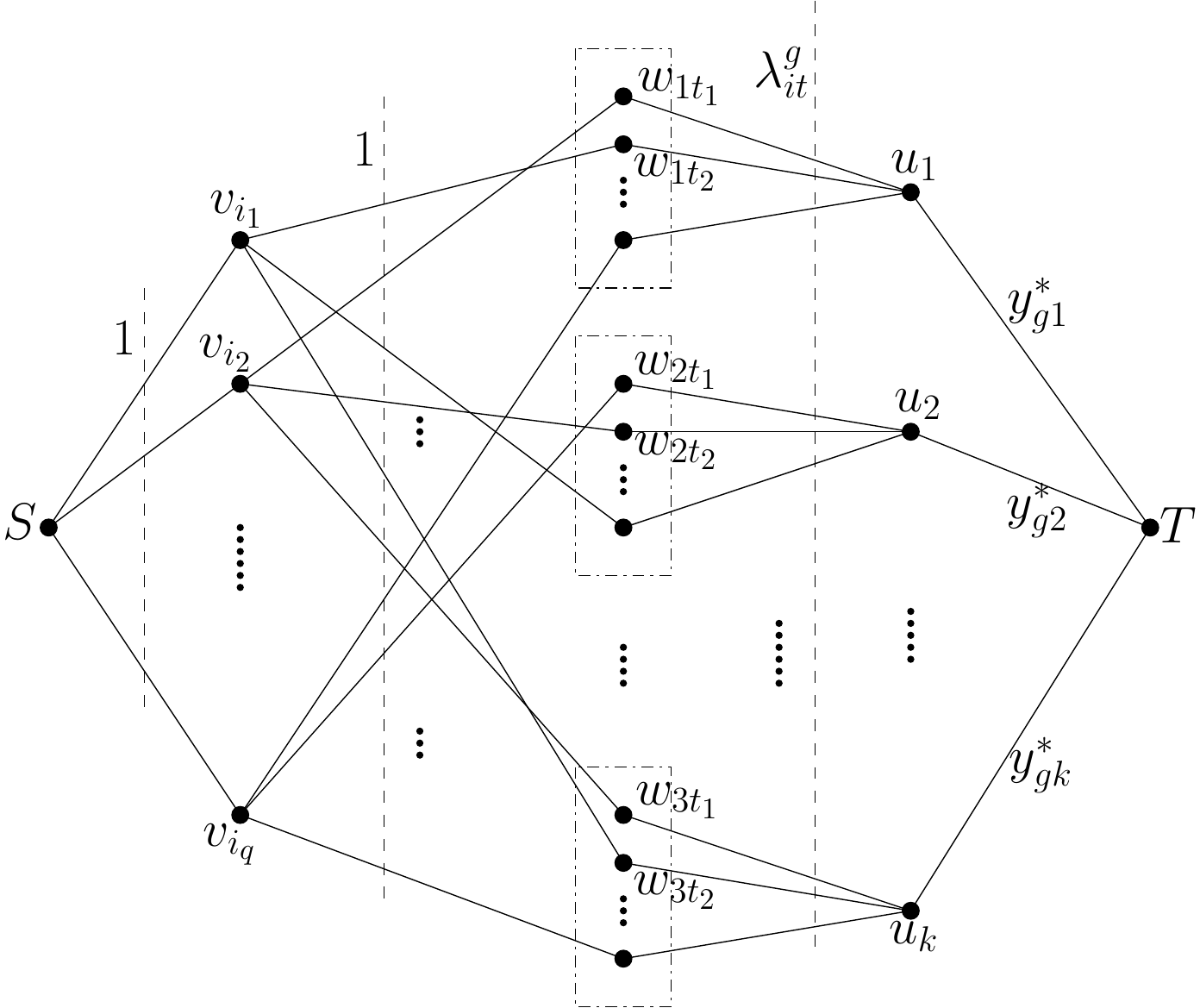}
			\caption{Figure showing the network $G_N$ constructed using the solution $(x^*,y^*)$.}
			\label{fig:net}
		\end{center}
\end{figure}

\begin{observation}\label{obs:saturatedarcs}
The number of points assigned to each center $c_i\in C$ via $\varphi_f$ is exactly $y_{gi}^*$. 
\end{observation}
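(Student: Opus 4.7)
The plan is to argue by a simple counting/saturation argument on the flow network $G_N$. First I would compute the total capacity entering $T$: the only arcs into $T$ are the arcs $(u_i, T)$ with capacities $y_{gi}^*$, so this total capacity equals $\sum_{i \in [k]} y_{gi}^*$.

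Next I would use the LP constraints satisfied by $(x^*, y^*)$ to evaluate this sum. Summing Constraint~\eqref{eq:ilp2} over $i \in [k]$ for the fixed group $g$ gives
\[
\sum_{i \in [k]} y_{gi}^* \;=\; \sum_{i \in [k]} \sum_{p_j \in P_g} x_{ij}^* \;=\; \sum_{p_j \in P_g} \sum_{i \in [k]} x_{ij}^* \;=\; |P_g|,
\]
where the last equality uses Constraint~\eqref{eq:ilp1}. Hence the total capacity of arcs into $T$ is exactly $|P_g|$.

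Since the computed integral flow $f$ has value $|P_g|$ and its value equals the total flow into $T$, every arc $(u_i,T)$ must be saturated by $f$. By flow conservation at $u_i$, the amount of flow entering $u_i$ from the nodes $\{w_{it}\}_t$ equals $y_{gi}^*$, and by conservation at each $w_{it}$ this in turn equals the number of arcs $(v_j, w_{it})$ carrying one unit of flow into $w_{it}$. Summing over $t$, the number of points $p_j \in P_g$ sent to center $c_i$ in $\varphi_f$ is precisely $y_{gi}^*$.

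The only subtlety I anticipate is making sure the capacities $\lceil \lambda_{it}^g \rceil$ on the middle layer do not create a bottleneck preventing saturation of $(u_i,T)$; but this is immediate since $(x^*,y^*)$ itself already induces a fractional flow of value $|P_g|$ that saturates each $(u_i,T)$, so the max flow is at least $|P_g|$, and together with the capacity upper bound we just derived this forces the integral max flow to saturate every $(u_i,T)$ as well. No further calculation is needed beyond invoking integrality of flow on a network with integer capacities.
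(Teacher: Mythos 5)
Your proposal is correct and follows essentially the same saturation argument as the paper: bound the flow into $T$ by $\sum_i y_{gi}^*$, note this sum equals $|P_g|$, and conclude that a flow of value $|P_g|$ must saturate every arc $(u_i,T)$. The only difference is that you derive $\sum_i y_{gi}^* = |P_g|$ explicitly from Constraints~\eqref{eq:ilp1} and~\eqref{eq:ilp2}, whereas the paper asserts it ``by definition''; your version is slightly more careful but not a different argument.
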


\begin{proof}
Due to the capacity constraints of the arcs $\{(u_i,T)\}$, the number of points assigned to $c_i$ must be at most $y_{gi}^*$. Also, by definition, $\sum_{i=1}^{k} y_{gi}^*=|P_g|$. As $f$ has value $|P_g|$, the capacity of the arcs $\{(u_i,T)\}$ must be saturated, which completes the proof. 
\end{proof}

Next, we analyse the load of $P_g$ assigned to each center via $\varphi_f$. 

\begin{lemma}\label{lem:singlegrpcost}
    For each center $c_i\in C$, $\sum_{p_j\in P_g: \varphi_f(p_j)=c_i} \hat{d}(p_j,c_i) \le \sum_t d_t\lambda_{it}^g+O(\epsilon) B$.  
\end{lemma}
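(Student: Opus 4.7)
The plan is to exploit the capacity structure of $G_N$ together with the guarantees the MILP solution $(x^*, y^*)$ gives on the costly classes. Observe that for each center $c_i$ and each distance class $t$, the only way a point of $P_g \cap S_{it}$ can be routed to $u_i$ in the integral flow $f$ is through the arc $(w_{it}, u_i)$, whose capacity is $\lceil \lambda_{it}^g \rceil$. Therefore, for every $t$ the number of points of $P_g \cap S_{it}$ that $\varphi_f$ assigns to $c_i$ is at most $\lceil \lambda_{it}^g \rceil$, and the load contribution of class $t$ is bounded by $d_t \lceil \lambda_{it}^g \rceil \le d_t \lambda_{it}^g + d_t$.

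Next I would split the sum over $t$ into costly classes ($t \ge 0$) and cheap classes ($t < 0$) and show that all ceiling losses come from the cheap side. For a costly class, Constraint~\eqref{eq:ilp5} forces $\lambda_{it}^g = \sum_{p_j \in P_g \cap S_{it}} x_{ij}^* = z_{i,t,g}$, and since the guessed quantities $z_{i,t,g}$ are nonnegative integers, $\lambda_{it}^g$ is already integral, so $\lceil \lambda_{it}^g \rceil = \lambda_{it}^g$ and there is no overhead. Thus the entire loss is concentrated on the cheap classes, and is at most $\sum_{t < 0} d_t$, where the sum ranges over nonempty cheap classes of $c_i$.

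Finally, I would bound the cheap-class overhead by a geometric series: using $d_t = (1+\epsilon)^t \epsilon^2 B$,
\[
\sum_{t < 0} d_t \;\le\; \epsilon^2 B \sum_{t = 1}^{\infty} (1+\epsilon)^{-t} \;=\; \epsilon^2 B \cdot \tfrac{1}{\epsilon} \;=\; \epsilon B,
\]
which is $O(\epsilon)B$. Combining this with the per-class bound gives
\[
\sum_{p_j \in P_g : \varphi_f(p_j) = c_i} \hat{d}(p_j, c_i) \;\le\; \sum_t d_t \lambda_{it}^g + O(\epsilon)B,
\]
as desired. The only subtlety — and the one place where care is needed — is to verify that the MILP really forces $\lambda_{it}^g$ to be integral on costly classes; without that, costly $t$'s could contribute a loss of order $d_\Delta = \Theta(B)$ per class, which would break the bound. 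Everything else is routine accounting on the flow network and on the geometric tail of cheap distance classes.
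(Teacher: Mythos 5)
Your proof is correct and follows essentially the same route as the paper's: bound the per-class load by the arc capacity $d_t\lceil\lambda_{it}^g\rceil$, note that Constraint~\eqref{eq:ilp5} makes $\lambda_{it}^g=z_{i,t,g}$ integral on costly classes so the ceiling loss vanishes there, and absorb the cheap-class overhead into the geometric tail $\sum_{t<0}d_t=O(\epsilon)B$. The "subtlety" you flag is exactly the point the paper's proof also hinges on, and your handling of it is right.
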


\begin{proof}
    Consider the arcs $\{(w_{it},u_i)\}$. The maximum flow corresponding to these arcs is bounded by the sum of the capacities $\sum_t \lceil \lambda_{it}^g\rceil$. Note that for every $0\le t\le \Delta$, $\lambda_{it}^g=z_{i,t,g}$, which is an integer. Now, 

\begin{align}
    \sum_{t < 0} d_t =\sum_{t < 0} (1+\epsilon)^t\epsilon^2 B < \epsilon^2 B((1+\epsilon)/\epsilon)=O(\epsilon) B \label{eq:sumofcheapdistances}
\end{align}
Hence, 
\begin{align*}
\sum_{t < 0} d_t\lceil \lambda_{it}^g\rceil \le \sum_{t < 0} d_t(\lambda_{it}^g+1)\le \sum_{t < 0}  d_t\lambda_{it}^g+\sum_{t < 0} d_t{=}_{[\ref{eq:sumofcheapdistances}]}\sum_{t < 0}  d_t\lambda_{it}^g+O(\epsilon) B
\end{align*}
It follows that, 
\begin{align*}
    \sum_{p_j\in P_g: \varphi_f(p_j)=c_i} \hat{d}(p_j,c_i)  & \le \sum_t  d_t\lceil \lambda_{it}^g\rceil\\
    &= \sum_{t \ge 0} d_t\lambda_{it}^g+ \sum_{t < 0} d_t\lceil \lambda_{it}^g\rceil\\
    &= \sum_{t \ge 0} d_t\lambda_{it}^g+\sum_{t < 0}  d_t\lambda_{it}^g+O(\epsilon) B\\
    &=\sum_t d_t\lambda_{it}^g+O(\epsilon) B
\end{align*}
\end{proof}

We repeat the above rounding process for all groups $g$. We combine the assignment functions $\varphi_f$ corresponding to the $\ell$ disjoint  groups to obtain a single assignment for the points in $P$. For simplicity, we also refer to this combined assignment as $\varphi_f$. By Observation \ref{obs:saturatedarcs}, $\varphi_f$ is feasible, as for each center $c_i$ and each group $g$, the weight of the points in $P_g$ assigned to $c_i$ is exactly $y_{gi}^*$ as in $(x^*,y^*)$. Next, we analyze the total load of each center. 

\begin{lemma}
    For each center $c_i\in C$, $\sum_{p_j: \varphi_f(p_j)=c_i} \hat{d}(p_j,c_i) \le  (1+O(\epsilon \ell)) B$
\end{lemma}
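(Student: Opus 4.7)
The plan is to sum the per-group bound from Lemma~\ref{lem:singlegrpcost} over all $\ell$ groups and then match the resulting aggregate with the MILP load constraint~(\ref{eq:ilp6}) satisfied by $(x^*, y^*)$.

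First, I would apply Lemma~\ref{lem:singlegrpcost} once for each group $g \in [\ell]$ and add the $\ell$ inequalities. Since the points in different groups are disjoint and together comprise $P$, the left-hand sides add up to exactly the total load at $c_i$, giving
\[
\sum_{p_j : \varphi_f(p_j) = c_i} \hat{d}(p_j, c_i) \;\le\; \sum_{g=1}^{\ell} \sum_t d_t \lambda_{it}^g + O(\epsilon \ell) B.
\]
So the remaining task is to show that the double sum on the right is at most $(1+\epsilon) B$.

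Next, I would unfold $\lambda_{it}^g = \sum_{p_j \in P_g \cap S_{it}} x_{ij}^*$ and split the sum over $t$ at $t = 0$. For the costly part ($t \ge 0$), Constraint~(\ref{eq:ilp5}) gives $\lambda_{it}^g = z_{i,t,g}$ for all $g$ and all $0 \le t \le \Delta$, so the costly contribution equals $\sum_{g=1}^{\ell} \sum_{0 \le t \le \Delta} d_t z_{i,t,g}$. For the cheap part ($t < 0$), summing $\lambda_{it}^g$ over $g$ collapses to $\sum_{p_j \in S_{it}} x_{ij}^*$ since $\{P_g\}$ partitions $P$. Putting these together,
\[
\sum_{g=1}^{\ell} \sum_t d_t \lambda_{it}^g \;=\; \sum_{g=1}^{\ell} \sum_{0 \le t \le \Delta} d_t z_{i,t,g} + \sum_{t < 0} d_t \sum_{p_j \in S_{it}} x_{ij}^*,
\]
which is exactly the left-hand side of Constraint~(\ref{eq:ilp6}) for index $i$. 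Since $(x^*, y^*)$ is feasible for Fair-LP, this quantity is at most $(1+\epsilon) B$.

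Combining the two bounds yields the claim: $\sum_{p_j : \varphi_f(p_j) = c_i} \hat{d}(p_j, c_i) \le (1 + \epsilon) B + O(\epsilon \ell) B = (1 + O(\epsilon \ell)) B$. The argument is essentially a bookkeeping step; the only subtlety is making sure the case split at $t = 0$ lines up correctly with Constraint~(\ref{eq:ilp5}), which is stated only for $t \ge 0$, so that the costly contributions are captured by the $z_{i,t,g}$ variables and the cheap contributions by the fractional $x_{ij}^*$ values, matching exactly the two terms of the MILP's load constraint.
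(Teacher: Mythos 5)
Your proof is correct and follows essentially the same route as the paper's: decompose the load of $c_i$ by group, apply Lemma~\ref{lem:singlegrpcost} to each group, split the resulting sum at $t=0$ so that the costly part is identified with the $z_{i,t,g}$ via Constraint~\ref{eq:ilp5} and the cheap part collapses over groups, and then invoke Constraint~\ref{eq:ilp6}. The bookkeeping, including the accumulation of $\ell$ additive $O(\epsilon)B$ error terms into $O(\epsilon\ell)B$, matches the paper exactly.
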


\begin{proof}
    \begin{align*}
       & \sum_{p_j: \varphi_f(p_j)=c_i} \hat{d}(p_j,c_i) 
        = \sum_{g=1}^{\ell} \sum_{p_j\in P_g: \varphi_f(p_j)=c_i} \hat{d}(p_j,c_i)\\
        &= \sum_{g=1}^{\ell} \Big(\sum_t d_t\lambda_{it}^g+O(\epsilon) B\Big) & (\text{By Lemma } \ref{lem:singlegrpcost})\\
        &= \sum_{g=1}^{\ell} \Big(\sum_{t \ge 0} d_t\lambda_{it}^g+\sum_{t < 0} d_t\lambda_{it}^g\Big) + O(\epsilon\ell) B\\
        &= \sum_{g=1}^{\ell} \Big(\sum_{t \ge 0} d_t z_{i,t,g}+\sum_{t < 0} d_t(\sum_{p_j\in P_g\cap S_{it}} x_{ij}^*)\Big)+ O(\epsilon\ell) B\\
        &= \bigg(\sum_{g=1}^{\ell} \sum_{t \ge 0} d_t z_{i,t,g}+\sum_{t < 0} d_t\sum_{p_j\in S_{it}} x_{ij}^*\bigg)+O(\epsilon\ell) B\\
        &\le (1+\epsilon) B+  O(\epsilon\ell) B & (\text{By Constraint } \ref{eq:ilp6} \text{ of Fair-LP})\\
        &=(1+O(\epsilon \ell)) B
    \end{align*}
\end{proof}

By scaling $\epsilon$ down by a factor of $\Omega(\ell)$, we obtain the desired approximate bound. Thus, by Observation \ref{obs:z-values}, it follows that the number of distinct possible choices of the $Z_{i,t,g}$ values is
$(\ell/\epsilon)^{O((k\ell^2/\epsilon)\log (\ell/\epsilon))}$. For each such choice, solving the MILP and rounding takes $(k\ell)^{O(k\ell)} n^{O(1)}$ time. Thus, the algorithm for solving the budgeted version runs in time $k^{O(k\ell)} {\ell}^{O((k\ell^2/\epsilon)\log (\ell/\epsilon))} n^{O(1)}$. The following lemma completes the proof of Theorem \ref{thm:fairassgn}. 

\begin{lemma}
The above MILP based algorithm for budgeted $(\alpha,\beta)$-fair assignment, in time $k^{O(k\ell)} {\ell}^{O((k\ell^2/\epsilon)\log (\ell/\epsilon))} n^{O(1)}$, either returns a feasible assignment of budget at most $(1+\epsilon)B$, or correctly detects that there is no feasible assignment of budget $B$.  
\end{lemma}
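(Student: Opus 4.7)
The plan is to assemble the pieces developed earlier in the section into a single algorithm and argue correctness and running time by reading off the previous observations and lemmas. First I would describe the outer loop: enumerate all possible vectors of values $\{z_{i,t,g} : i \in [k], 0 \le t \le \Delta, g \in [\ell]\}$, whose number is bounded in Observation~\ref{obs:z-values} (after rescaling $\epsilon$ down by $\Theta(\ell)$, this becomes $(\ell/\epsilon)^{O((k\ell^2/\epsilon)\log(\ell/\epsilon))}$). For each such guess, instantiate Fair-LP with these constants and solve the MILP using Proposition~\ref{proposition:MILP}, which takes $(k\ell)^{O(k\ell)} n^{O(1)}$ time thanks to only $k\ell$ integral variables. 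If the MILP is infeasible for every guess, output that $I$ is a no-instance; otherwise, apply the flow-rounding procedure (per group) to some feasible $(x^*,y^*)$ and return the resulting integral assignment $\varphi_f$.

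For the no-instance direction I would argue contrapositively: if $I$ admits a feasible assignment of budget $B$, then the induced modified-distance assignment $\varphi^*$ has cost at most $(1+\epsilon)B$ on $I'$. Plugging in the $z_{i,t,g}$ counts of $\varphi^*$, the corresponding Fair-LP is feasible (the integrality of $y$ follows because each $y_{gi}$ is a count of points assigned to $c_i$ from group $g$, hence automatically an integer), so the enumeration will hit this guess and the MILP solver will not report infeasibility. For the yes-instance direction, once the MILP returns $(x^*,y^*)$, the per-group flow network constructed in the excerpt has a fractional flow of value $|P_g|$, so by integrality of max-flow we obtain an integral assignment $\varphi_f$ on $P_g$. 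Observation~\ref{obs:saturatedarcs} guarantees that exactly $y^*_{gi}$ points of group $g$ are assigned to $c_i$, which matches the MILP solution and therefore preserves the fairness constraints (Constraints~\ref{eq:ilp3}--\ref{eq:ilp4}) after combining over all $g$.

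For the cost bound, I would invoke Lemma~\ref{lem:singlegrpcost} to argue that the load contributed to each $c_i$ by group $g$ after rounding exceeds the LP contribution by at most $O(\epsilon)B$. Summing over $\ell$ groups and using Constraint~\ref{eq:ilp6}, the total load at each center is at most $(1+\epsilon)B + O(\epsilon\ell)B = (1 + O(\epsilon\ell))B$. Finally, replace $\epsilon$ by $\epsilon/\Theta(\ell)$ throughout so that the guarantee becomes $(1+\epsilon)B$; the blow-up in the enumeration count and the $\Delta$ bound is absorbed into the stated running time $k^{O(k\ell)} \ell^{O((k\ell^2/\epsilon)\log(\ell/\epsilon))} n^{O(1)}$. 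Since we work with $\hat d$, which overestimates $d$ by at most a factor $(1+\epsilon)$, the returned $\varphi_f$ has true cost within $(1+\epsilon)B$ as claimed.

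The main obstacle I anticipate is the bookkeeping around the $\epsilon$-rescaling: one must verify that after replacing $\epsilon$ by $\epsilon/\Theta(\ell)$, the enumeration bound $(1/\epsilon)^{O((k\ell/\epsilon)\log(1/\epsilon))}$ in Observation~\ref{obs:z-values} still yields the claimed running time and that the distance-class bound $\Delta = O((1/\epsilon)\log(1/\epsilon))$ does not interact badly with the flow-network size. Everything else---MILP feasibility, integrality of flow, and the additive $O(\epsilon)B$ loss per group---has been established in the preceding lemmas and can be cited directly.
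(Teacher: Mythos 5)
Your proposal is correct and follows essentially the same route as the paper: the paper itself leaves this lemma as a summary of the preceding development (guess the $z_{i,t,g}$ values, solve Fair-LP via Proposition~\ref{proposition:MILP}, round per group by integral flows, apply Lemma~\ref{lem:singlegrpcost} and Constraint~\ref{eq:ilp6} to get the $(1+O(\epsilon\ell))B$ bound, then rescale $\epsilon$ by $\Theta(\ell)$), and your write-up assembles exactly these pieces, including the correct contrapositive argument for the no-instance direction and the running-time accounting.
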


 \section{Approximation Algorithms for FMLkC}
\label{sec:algo}

In this section, we describe the FPT approximation algorithms for the FMLkC problem, both in the general metric case and in the Euclidean case.
In the general metric case, we aim for a $(3 + \epsilon)$-approximation, and in the Euclidean case for a $(1 + \epsilon)$ approximation, for a given $0 < \epsilon < 1$.
Essentially, we obtain these algorithms as a combination of our assignment algorithm presented before, and known generic results for constrained clustering problems that follow the framework of Ding and Xu~\cite{ding2020unified}. Our general metric algorithm employs the result of Goyal, Jaiswal and Kumar~\cite{GoyalJK20}, and in the Euclidean case we use the result of Bhattacharya, Jaiswal and Kumar~\cite{Bhattacharya2018}. Both of these provide algorithms that in FPT time produce a reasonably short list of candidate sets of $k$ centers, such that for each possible clustering of the input points one of the sets in the list provides the desired approximation, with good probability.
Note that the results mentioned above are stated in fact for the $k$-median objective, and not the minimum-load clustering that we study in this work. However, by tweaking the error guarantees in the respective proofs we can show that these results hold in the minimum-load setting as well. Next, we present these in detail.

We start with the Euclidean case and show the following analogue of Theorem~1 in~\cite{Bhattacharya2018} for the minimum-load objective.
\begin{theorem}
    Given a set of $n$ points $P \subset \mathbb{R}^d$, parameters $k$ and $0 < \epsilon < 1$, there is a randomized algorithm that in time $2^{\tilde{O}(k/\epsilon^{O(1)})} nd$  outputs a list $L$ of $2^{\tilde{O}(k/\epsilon^{O(1)})}$ sets of centers of size $k$ such that for any partition $\mathbb{P}^* = \{P_1^*$, \ldots, $P_k^*\}$ of $P$ the following event occurs with probability at least $1/2$: there is a set $C$  in $L$ such that 
    \[cost_C(\mathbb{P}^*) \le (1 + \epsilon) \max_{i \in [k]} \cost (P_i^*).\]
    \label{thm:euclidean_centers}
\end{theorem}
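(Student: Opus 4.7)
The plan is to adapt the $D^2$-sampling based algorithm of Bhattacharya, Jaiswal and Kumar~\cite{Bhattacharya2018}, developed for constrained $k$-means, to the sum-of-distances (minimum-load) objective. The central observation is that on its successful branch the BJK analysis actually yields a \emph{per-cluster} approximation guarantee, which is stronger than both the sum-of-costs guarantee used there and the max-of-costs guarantee we need here. Concretely, I claim that (a suitable variant of) BJK produces, with probability at least $1/2$, a list $L$ containing some $k$-tuple $C = (c_1, \ldots, c_k)$ with
\[
\cost_{c_i}(P_i^*) \le (1+\epsilon)\cost(P_i^*) \quad \text{for every } i \in [k].
\]
Given this, the theorem is immediate: $\cost_C(\mathbb{P}^*) \le \max_i \cost_{c_i}(P_i^*) \le (1+\epsilon) \max_i \cost(P_i^*)$.

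To adapt the algorithm, I would replace $D^2$-sampling by $D$-sampling (probability proportional to $d(p, C_j)$, not its square) and the $k$-means Inaba lemma by its $k$-median counterpart: for a uniform sample of size $N = 2^{\tilde O(1/\epsilon^{O(1)})}$ from any point set $S \subset \mathbb{R}^d$, enumerating all $O(1/\epsilon)$-sized subsets and applying the Kumar--Sabharwal--Sen 1-median procedure yields a $(1+\epsilon)$-approximate 1-median of $S$ with constant probability. The algorithm then iterates $k$ times, branching into $2^{\tilde O(1/\epsilon^{O(1)})}$ candidate extensions at each step; the final collection of $k$-tuples has size $2^{\tilde O(k/\epsilon^{O(1)})}$ and is computed in time $2^{\tilde O(k/\epsilon^{O(1)})} nd$.

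The per-cluster guarantee is established by induction on the iteration index. Call $P_i^*$ \emph{covered} by the partial set $C_j$ if some $c \in C_j$ satisfies $\cost_c(P_i^*) \le (1+\epsilon)\cost(P_i^*)$. Beginning with $C_0 = \emptyset$ (covering zero clusters), I would show that with high probability over the $(j+1)$-st round of $D$-sampling some branch extends $C_j$ to cover one additional uncovered cluster: the $D$-mass contributed by the union of uncovered clusters is a substantial fraction of the total mass, so the sample contains $\Omega(N)$ points from at least one uncovered cluster $P_{i^*}^*$, and the subsequent subset-enumeration then produces a $(1+\epsilon)$-approximate center for $P_{i^*}^*$ via the $k$-median Inaba lemma. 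Union-bounding over $k$ iterations yields overall success probability $\ge 1/2$.

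The main obstacle, and the ``tweaking'' hinted at in the overview, is precisely in guaranteeing that the uncovered $D$-mass is substantial. In the minimum-load setting, total $D$-mass can be dominated by a single cluster of large optimal cost, and an uncovered cluster of much smaller cost may contribute negligible mass and be missed. I would overcome this by (i) guessing the minimum-load optimum up to a factor of $(1+\epsilon)$, incurring an $O(\log n /\epsilon)$-multiplicative blowup in the list size, and (ii) enlarging the per-iteration sample size by a $\poly(k)$ factor so that every uncovered cluster of cost $\Omega(\OPT/\poly(k))$ is sampled with constant probability. Uncovered clusters of smaller cost contribute $o(\epsilon \OPT)$ to the max objective and can be approximated for free by an additional uniform sample inserted into the list. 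All modifications remain within the claimed $2^{\tilde O(k/\epsilon^{O(1)})}$ bound.
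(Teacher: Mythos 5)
Your high-level plan (reuse the Bhattacharya--Jaiswal--Kumar list-of-centers machinery, extract a per-cluster guarantee, and finish by taking a maximum) matches the paper's, and your final one-line deduction is exactly the paper's. The gap is in the per-cluster guarantee you posit. You claim a purely multiplicative bound $\cost_{c_i}(P_i^*) \le (1+\epsilon)\cost(P_i^*)$ for \emph{every} cluster; this is strictly stronger than what Bhattacharya et al.\ prove, and it is in fact unattainable by any algorithm producing only $2^{\tilde{O}(k/\epsilon^{O(1)})}$ candidate centers: take a cluster of three points inside a ball of radius $\delta$, with $\delta$ arbitrarily small relative to the rest of the instance. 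A $(1+\epsilon)$-multiplicative center for it must lie within $O(\epsilon\delta)$ of the ball, which forces the algorithm to have sampled one of those three points; their mass under $D$-sampling (with respect to any partial center set) and under uniform sampling can both be made negligible, so no list of size independent of $n$ succeeds with probability $1/2$. What the BJK invariant $P(i)$ actually gives is the mixed bound $\cost_{c_j}(P_{i_j}^*) \le (1+\epsilon/2)\cost(P_{i_j}^*) + \frac{\epsilon}{2k}\sum_{i=1}^k \cost(P_i^*)$, where the additive term arises precisely from the mechanism you omit: clusters with negligible residual mass are handled by padding the sample multiset with copies of the already-chosen centers, which can only yield an error additive in the \emph{total} cost. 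The paper's entire ``tweak'' is the observation that this additive term is absorbed by the min-load objective, since $\frac{1}{k}\sum_i \cost(P_i^*) \le \max_i \cost(P_i^*)$; no modification of the algorithm, no guessing of the optimum, and no separate treatment of small clusters is needed.

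Your proposed repairs do not close the gap you correctly identified. Inflating the sample by $\poly(k)$ guarantees hitting an uncovered cluster of cost $\Omega(\OPT/\poly(k))$ only if its $D$-mass under the current centers is comparably large, which need not hold (its points may sit close to existing centers without any single existing center being a good $1$-median for it). The claim that clusters of smaller cost ``can be approximated for free by an additional uniform sample'' conflates small optimal cost with small cost under an arbitrary center: a cluster that is small in both cost and cardinality is hit by a uniform sample of $P$ with probability only $|P_i^*|/|P|$, yet an arbitrarily placed center can give it load far exceeding $\OPT$. The guessing step also presumes a polynomially bounded range of candidate values for $\max_i\cost(P_i^*)$, which is not given. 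The fix is to abandon the purely multiplicative target, quote the BJK mixed bound verbatim, and conclude with the maximum argument you already have.
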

\begin{proof}
    The algorithm proceeds exactly as Algorithm 5.1 in~\cite{Bhattacharya2018}. For the analysis, we observe that Bhattacharya et al. prove the following statement (follows immediately from invariant $P(i)$ in~\cite{Bhattacharya2018}): With constant probability, there is a set of centers $C = \{c_1, \ldots, c_k\}$ in the output of the algorithm and the permutation $i_1$, \ldots, $i_k$ of the clusters in $\mathbb{P}^*$ such that for each $j \in [k]$, 
    \[\cost_{c_j} (P_{i_j}^*) \le (1 + \frac{\epsilon}{2}) \cdot \cost(P_{i_j}^*) + \frac{\epsilon}{2k} \cdot \sum_{i = 1}^k \cost(P_i^*).\]
    From here it easily follows that the set of centers $C$ achieves $(1 + \epsilon)$-approximation of the cost of $\mathbb{P}^*$ with respect to the minimum-load objective:
    \[\cost_C(\mathbb{P}^*) \le \max_{j \in [k]} \cost_{c_j} (P_{i_j}^*) \le (1 + \frac{\epsilon}{2}) \cdot \max_{j \in [k]} \cost(P_{i_j}^*) + \frac{\epsilon}{2k} \cdot \sum_{i = 1}^k \cost(P_i^*) \le (1 + \epsilon) \cdot \max_{j \in [k]} \cost(P_{i_j}^*),\]
    since $\sum_{i = 1}^k \cost(P_{i}^*) \le k \max_{j \in [k]} \cost(P_{i_j}^*)$.
\end{proof}

In the general metric case, a similar result can be shown, however with the approximation factor of $(3 + \epsilon)$. Specifically, we show an analogue of Theorem~5 in~\cite{GoyalJK20} for the minimum-load objective. Similarly to Theorem~\ref{thm:euclidean_centers}, the algorithm and the analysis is identical to what is presented in~\cite{GoyalJK20}, up to a different view on the cost upper bound.

\begin{theorem}
    Given a set of $n$ points $P $ in a metric space, parameters $k$ and $0 < \epsilon < 1$, there is a randomized algorithm that in time $(k/\epsilon)^{O(k)} n$  outputs a list $L$ of $(k/\epsilon)^{O(k)}$ sets of centers of size $k$ such that for any partition $\mathbb{P}^* = \{P_1^*$, \ldots, $P_k^*\}$ of $P$ the following event occurs with probability at least $1/2$: there is a set $C$ in $L$ such that 
    \[\cost_C(\mathbb{P}^*) \le (3 + \epsilon) \max_{i \in [k]} \cost (P_i^*).\]
    \label{thm:metric_centers}
\end{theorem}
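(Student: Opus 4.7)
The plan is to invoke the sampling-based algorithm of Goyal, Jaiswal and Kumar~\cite{GoyalJK20} verbatim and merely reinterpret its output guarantee for the minimum-load objective; no change to the algorithm itself is needed. Their procedure, in time $(k/\epsilon)^{O(k)} n$, produces a list $L$ of $(k/\epsilon)^{O(k)}$ candidate $k$-center sets, and the core of their analysis establishes an inductive per-cluster invariant (analogous to the invariant $P(i)$ referenced in the proof of Theorem~\ref{thm:euclidean_centers}, but with a $k$-median rather than $k$-means guarantee): with constant probability there exist $C = \{c_1, \ldots, c_k\} \in L$ and a permutation $i_1, \ldots, i_k$ of $[k]$ such that simultaneously for every $j \in [k]$,
\[\cost_{c_j}(P^*_{i_j}) \;\le\; 3\, \cost(P^*_{i_j}) + \frac{\epsilon}{k}\sum_{i=1}^k \cost(P^*_i).\]
The factor $3$, in place of $1 + \epsilon/2$ in the Euclidean case, arises from a standard Markov plus triangle-inequality argument in a general metric: one cannot use a centroid, and instead the cost at a sampled approximate center is bounded by three times the optimal cluster cost plus a small additive slack.

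Given this joint per-cluster invariant, the theorem follows by a direct algebraic manipulation identical in spirit to the Euclidean case. Taking the maximum over $j$ on both sides and applying the trivial inequality $\sum_{i=1}^k \cost(P^*_i) \le k \cdot \max_{i \in [k]} \cost(P^*_i)$, one obtains
\[\cost_C(\mathbb{P}^*) \;\le\; \max_{j \in [k]}\cost_{c_j}(P^*_{i_j}) \;\le\; 3\, \max_{i \in [k]}\cost(P^*_i) + \epsilon \cdot \max_{i \in [k]}\cost(P^*_i) \;=\; (3+\epsilon) \max_{i \in [k]}\cost(P^*_i),\]
which is exactly the claimed bound. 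The running time and the list length are inherited directly from~\cite{GoyalJK20} without modification.

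The main obstacle is verifying that the guarantee of~\cite{GoyalJK20} can indeed be read off in the joint per-cluster form above, rather than only as an aggregate statement about $\sum_j \cost_{c_j}(P^*_{i_j})$ that would suffice for $k$-median. Fortunately, this simultaneous version is precisely what their inductive invariant carries along the sampling chain of length $k$, so it is extracted by quoting the inductive lemma underlying their main theorem instead of the final aggregated statement; this is exactly the maneuver performed in the proof of Theorem~\ref{thm:euclidean_centers}. The probability is finally amplified from constant to $\ge 1/2$ by independently repeating the randomized procedure $O(1)$ times and concatenating the resulting lists, which affects the running time and the list size only by a constant factor.
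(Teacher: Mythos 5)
Your proposal matches the paper's proof essentially verbatim: both run Algorithm 1 of Goyal, Jaiswal and Kumar unchanged, quote the simultaneous per-cluster guarantee (\textbf{Property-I}) rather than the aggregated $k$-median bound, and then replace the sum over clusters by a maximum using $\sum_{i=1}^k \cost(P_i^*) \le k \max_{j \in [k]} \cost(P_{i_j}^*)$. The only cosmetic difference is how the $\epsilon$ slack is split between the multiplicative and additive terms of the invariant (the paper uses $(3+\epsilon/2)$ and $\epsilon/(2k)$), which does not affect the final $(3+\epsilon)$ bound.
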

\begin{proof}
    The algorithm proceeds exactly as Algorithm 1 in~\cite{GoyalJK20}. For the analysis, we observe that Goyal et al. prove the following statement (encapsulated by \textbf{Property-I} in~\cite{GoyalJK20}): With constant probability, there is a set of centers $C = \{c_1, \ldots, c_k\}$ in the output of the algorithm and the permutation $i_1$, \ldots, $i_k$ of the clusters in $\mathbb{P}^*$ such that for each $j \in [k]$, 
    \[\cost_{c_j} (P_{i_j}^*) \le (3 + \frac{\epsilon}{2}) \cdot \cost(P_{i_j}^*) + \frac{\epsilon}{2k} \cdot \sum_{i = 1}^k \cost(P_i^*).\]
    Now, analogously to the proof of Theorem~\ref{thm:euclidean_centers}, it follows that the set of centers $C$ achieves $(3 + \epsilon)$-approximation of the cost of $\mathbb{P}^*$ with respect to the minimum-load objective:
    \[\cost_C(\mathbb{P}^*) \le \max_{j \in [k]} \cost_{c_j} (P_{i_j}^*) \le (3 + \frac{\epsilon}{2}) \cdot \max_{j \in [k]} \cost(P_{i_j}^*) + \frac{\epsilon}{2k} \cdot \sum_{i = 1}^k \cost(P_i^*) \le (3 + \epsilon) \cdot \max_{j \in [k]} \cost(P_{i_j}^*),\]
    since $\sum_{i = 1}^k \cost(P_{i}^*) \le k \max_{j \in [k]} \cost(P_{i_j}^*)$.
\end{proof}

Now, Theorem~\ref{thm:euclidean_centers} and Theorem~\ref{thm:metric_centers} imply that for the Minimum-Load $k$-Clustering problem with any given set of constraints on the desired clustering, there exists a $(1 + \epsilon)$-approximation algorithm in the Euclidean case, and a $(3 + \epsilon)$-approximation algorithm in the general metric case. The running time is $2^{\tilde{O}(k/\epsilon^{O(1)})}(nd + T)$ for both algorithms, where $T$ is the running time of an algorithm solving the respective assignment problem, either exact or $(1 + \epsilon)$-approximate.
In particular, combining the theorems with our approximation algorithm for $(\alpha, \beta)$-fair assignment (Theorem~\ref{thm:fairassgn}), for the FMLkC problem we obtain a $(1 + \epsilon)$-approximation in $\mathbb{R}^d$ and a $(3 + \epsilon)$-approximation in general metric in FPT time when parameterized by the number of clusters $k$ and the number of protected groups $\ell$.

\begin{theorem}
    For any $0 < \epsilon < 1$, there exists a randomized $(1 + \epsilon)$-approximation algorithm for $(\alpha, \beta)$-Fair Minimum-Load $k$-Clustering in $\mathbb{R}^d$ with running time
    \[2^{\tilde{O}(k \ell^2/\epsilon^{O(1)})} n^{O(1)}d.\]
    The same holds in general metric with the approximation factor of $(3 + \epsilon)$, where the running time becomes
    \[2^{\tilde{O}(k \ell^2/\epsilon)} n^{O(1)}.\]
    \label{thm:fpt_approx}
\end{theorem}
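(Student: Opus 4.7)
The plan is to combine the center-generation theorems (Theorems~\ref{thm:euclidean_centers} and~\ref{thm:metric_centers}) with the $(1+\epsilon)$-approximate fair assignment algorithm (Theorem~\ref{thm:fairassgn}). Let $\varphi^*$ denote an optimal solution to the FMLkC instance, which induces the clustering $\mathbb{P}^* = \{P_1^*, \ldots, P_k^*\}$; note that $\mathbb{P}^*$ is \emph{fair} by construction. Setting $\epsilon' = \Theta(\epsilon)$, I would first run the appropriate center-generation algorithm with parameter $\epsilon'$, obtaining a list $L$ of candidate $k$-tuples of centers. By Theorem~\ref{thm:euclidean_centers} (respectively Theorem~\ref{thm:metric_centers}) applied to $\mathbb{P}^*$, with probability at least $1/2$ there exists $C \in L$ such that
\[\cost_C(\mathbb{P}^*) \le (1+\epsilon') \max_{i \in [k]} \cost(P_i^*) \quad \bigl(\text{resp. } (3+\epsilon') \max_{i \in [k]} \cost(P_i^*)\bigr).\]
Since $\max_{i \in [k]} \cost(P_i^*) \le \max_{i \in [k]} \cost_{c_i^*}(P_i^*) = \OPT$, where $c_i^*$ are the centers used by $\varphi^*$, this bound becomes $(1+\epsilon')\OPT$ in the Euclidean case and $(3+\epsilon')\OPT$ in the metric case.

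The key observation is that the pair $(\mathbb{P}^*, C)$, under some permutation matching clusters to centers, already constitutes a feasible \emph{fair} assignment: the fairness constraints depend only on the composition of each cluster and not on the identity of the assigned center, so any clustering whose partition equals $\mathbb{P}^*$ inherits its fairness. Consequently, the optimum fair assignment onto the center set $C$ has cost at most $(1+\epsilon')\OPT$ (resp. $(3+\epsilon')\OPT$). The algorithm then iterates over every $C' \in L$, runs the fair assignment procedure from Theorem~\ref{thm:fairassgn} with accuracy $\epsilon'$, and returns the feasible assignment of the smallest cost across all $C'$. For the particular $C$ identified above, this yields a fair assignment of cost at most $(1+\epsilon')^2 \OPT \le (1+\epsilon)\OPT$ in the Euclidean case, and $(1+\epsilon')(3+\epsilon')\OPT \le (3+\epsilon)\OPT$ in the general metric case, after tuning $\epsilon' = \Theta(\epsilon)$.

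The running time is the product of $|L|$ and the cost of the fair assignment procedure on each candidate set. In $\mathbb{R}^d$ this gives $2^{\tilde{O}(k/\epsilon^{O(1)})} \cdot k^{O(k\ell)} \ell^{O((k\ell^2/\epsilon)\log(\ell/\epsilon))} n^{O(1)} d$, which simplifies to $2^{\tilde{O}(k\ell^2/\epsilon^{O(1)})} n^{O(1)} d$; the metric case follows analogously from $|L|=(k/\epsilon)^{O(k)}$. The $1/2$ success probability can be boosted to $1-\delta$ by $O(\log(1/\delta))$ independent repetitions at negligible asymptotic cost. The main conceptual hurdle, and really the only nontrivial point, is reconciling the two approximation stages with the fairness constraints: the center-generation theorems only promise a good set of centers for some pre-specified target partition, so one must anchor them to the optimal fair partition $\mathbb{P}^*$ and then leverage the fact that $\mathbb{P}^*$ itself witnesses the existence of a cheap \emph{fair} assignment onto those centers, which Theorem~\ref{thm:fairassgn} then approximately recovers.
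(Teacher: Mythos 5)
Your proposal is correct and follows essentially the same route as the paper's proof: fix the optimal fair partition $\mathbb{P}^*$, generate the candidate center list via Theorem~\ref{thm:euclidean_centers} (resp.\ Theorem~\ref{thm:metric_centers}) with a scaled error parameter, run the approximate fair assignment algorithm of Theorem~\ref{thm:fairassgn} on every candidate set, and return the cheapest result, with the same chain of inequalities for correctness and the same running-time accounting. The one step you spell out more explicitly than the paper --- that $\mathbb{P}^*$ itself witnesses a feasible \emph{fair} assignment onto the good center set $C$ because fairness depends only on the cluster composition --- is indeed needed for the inequality $\cost(\psi)\le(1+\epsilon_0)\cost_{C}(\mathbb{P}^*)$, and making it explicit is a small improvement rather than a different argument.
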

\begin{proof}
    First, we deal with the Euclidean case. Fix an optimal fair min-load $k$-clustering $\mathbb{P}^* = \{P_1^*$, \ldots, $P_k^*\}$ of $P$. Run the algorithm of Theorem~\ref{thm:euclidean_centers} on $P$ with error parameter $\epsilon_0$ to obtain the list $L$ of candidate sets of centers, here $\epsilon_0$ is such that $(1 + \epsilon_0)^2 \le (1 + \epsilon)$. In the following, assume that the event described in the statement of Theorem~\ref{thm:euclidean_centers} occurs for the clustering $\mathbb{P}^*$, by constant number of repetitions the probability of this can be lifted arbitrarily close to one. That is, there exists a set of $k$ centers $C'$ in $L$ such that
    \begin{equation}
    cost_{C'}(\mathbb{P}^*) \le (1 + \epsilon_0) \max_{i \in [k]} \cost (P_i^*).
        \label{eq:best_centers}
    \end{equation}
    Now, for each set of centers $C$ in $L$, run the $(1 + \epsilon_0)$-approximate assignment algorithm given by Theorem~\ref{thm:fairassgn} on $(P, C)$, and choose the set of centers $C''$ that gives the best assignment cost among the considered sets, denote the computed assignment from $P$ to $C''$ by $\varphi$. In what follows, we show that the set of centers $C''$ and the assignment $\varphi: P \to C''$ provide $(1 + \epsilon)$-approximate solution to the given FMLkC instance. Denote by $\psi$  the assignment from $P$ to $C'$ that the algorithm outputs,
    \[\cost(\varphi) \le \cost(\psi) \le (1 + \epsilon_0) \cost_{C'} (\mathbb{P}^*) \le (1 + \epsilon_0)^2 \max_{i \in [k]} \cost (P_i^*) \le (1 + \epsilon) \max_{i \in [k]} \cost (P_i^*),\]
    where the first inequality is by the choice of $C''$ and $\varphi$, the second is by Theorem~\ref{thm:fairassgn}, and the third inequality is by \eqref{eq:best_centers}.

    Finally, we show that the running time bound holds. Invoking Theorem~\ref{thm:euclidean_centers} takes time $2^{\tilde{O}(k/\epsilon_0^{O(1)})} nd$, and produces a list of $2^{\tilde{O}(k/\epsilon_0^{O(1)})}$ sets of centers. On each of them, running the algorithm of Theorem~\ref{thm:fairassgn} takes time
$k^{O(k\ell)} {\ell}^{O((k\ell^2/\epsilon_0)\log (\ell/\epsilon_0))} n^{O(1)}d$. Since $\epsilon_0 = O(\epsilon)$, the total running time can be bounded as
\[2^{\tilde{O}(k/\epsilon^{O(1)})} \left(nd + k^{O(k\ell)} {\ell}^{O((k\ell^2/\epsilon)\log (\ell/\epsilon))} n^{O(1)}d\right) = 2^{\tilde{O}(k \ell^2/\epsilon^{O(1)})} n^{O(1)}d.\]

    The general metric case is identical, but to obtain the list of candidate sets of centers we use Theorem~\ref{thm:metric_centers} instead of Theorem~\ref{thm:euclidean_centers}. The final cost bound changes to
    \[\cost(\varphi) \le \cost(\psi) \le (1 + \epsilon_0) \cost_{C'} (\mathbb{P}^*) \le (3 + \epsilon_0) \cdot (1 + \epsilon_0) \max_{i \in [k]} \cost (P_i^*) \le (3 + \epsilon) \max_{i \in [k]} \cost (P_i^*),\]
    where $\epsilon_0$ is chosen so that $(3 + \epsilon_0)\cdot (1 + \epsilon_0) \le (3 + \epsilon)$.
\end{proof}

\bibliography{clustering-bib}


\end{document}